\newtheorem{theorem}{Theorem}
   \newtheorem{lemma}{Lemma}
\newtheorem{definition}{Definition}
\newcommandx*{\LDAUOmicron}[2][1=@pkling_false]{\mathcal{O}\ifthenelse{\equal{#1}{small}}{\bigl(#2\bigr)}{\left(#2\right)}}
\newcommandx*{\LDAUomicron}[2][1=@pkling_false]{\mathrm{o}\ifthenelse{\equal{#1}{small}}{\bigl(#2\bigr)}{\left(#2\right)}}
\newcommandx*{\LDAUOmega}[2][1=@pkling_false]{\Omega\ifthenelse{\equal{#1}{small}}{\bigl(#2\bigr)}{\left(#2\right)}}
\newcommandx*{\LDAUomega}[2][1=@pkling_false]{\omega\ifthenelse{\equal{#1}{small}}{\bigl(#2\bigr)}{\left(#2\right)}}
\newcommandx*{\LDAUTheta}[2][1=@pkling_false]{\Theta\ifthenelse{\equal{#1}{small}}{\bigl(#2\bigr)}{\left(#2\right)}}
\newcommandx*{\set}[2][2=@pkling_false]{\left\{#1\ifthenelse{\equal{#2}{@pkling_false}}{}{\;\middle|\;#2}\right\}}
\newcommand{\oBdA}{w.l.o.g.\xspace}
\newcommand{\calO}{\mathcal{O}}
\newcommand{\viewingradius}{20}
\newcommand{\viewingradiusident}{11}
\newcommand{\pipelininginterval}{22}
\newcommand{\crossdistance}{3}
\newcommand{\figscale}{0.9}
\newcommand{\figscalee}{0.9}
\title{Asymptotically Optimal Gathering on a Grid}
\author{
    Andreas Cord-Landwehr
    \and
    Matthias Fischer
    \and
    Daniel Jung
    \and
    Friedhelm Meyer auf der Heide
}
\date{
    Heinz Nixdorf Institute \& Computer Science Department\\[0.2em]
    University of Paderborn (Germany)\\[0.2em]
    Fürstenallee 11, 33102 Paderborn\\[0.2em]
\quad\\[0.2em]
    \texttt{\{cola,mafi,daniel.jung,fmadh\}@uni-paderborn.de}
}
\begin{document}
\bibliographystyle{alphadin}
\maketitle
\thispagestyle{empty}
\begin{abstract}
    In this paper, we solve the local gathering problem of a swarm of $n$ indistinguishable, point-shaped
    robots on a two dimensional grid in asymptotically optimal time $\calO(n)$ in the fully synchronous $\mathcal{FSYNC}$ time model.
    Given an arbitrarily distributed (yet connected) swarm of robots, the gathering problem on the grid is to locate all robots within a $2\times 2$-sized area that is not known beforehand.
    Two robots are connected if they are vertical or horizontal neighbors on the grid.
    The locality constraint means that no global control, no compass, no global communication and only local vision is available; hence, a robot can only see its grid neighbors up to a constant $L_1$-distance, which also limits its movements.
    A robot can move to one of its eight neighboring grid cells and if two or more robots move to the same location they are \emph{merged} to be only one robot.
    The locality constraint is the significant challenging issue here, since robot movements must not harm the (only globally checkable) swarm connectivity.
    For solving the gathering problem, we provide a synchronous algorithm -- executed by every robot -- which ensures that robots merge without breaking the swarm connectivity.
    In our model, robots can obtain a special state, which marks such a robot to be performing specific connectivity preserving movements in order to allow later merge operations of the swarm.
    Compared to the grid, for gathering in the Euclidean plane for the same robot and time model the best known upper bound is $\calO(n^2)$ \cite{gatheringthetanquadrat}.
\end{abstract}
\textbf{Keywords: gathering problem, autonomous robots, distributed algorithms, local algorithms, mobile agents, runtime bound, swarm formation problems}
\section{Introduction}
    In the field of robot formation problems, there is a strong interest in what is feasible when robots only have very limited capabilities.
    Runtime bounds are intrinsically hard for such problems and tight results are rarely provided.
    One common formation problem is the gathering, where robots have to gather in one, beforehand unspecified point.

    There are essentially three significant approaches that try to tackle this problem:
    The first approach \cite{gatheringthetanquadrat} is a gathering algorithm for $n$ autonomous, indistinguishable and point-shaped connected robots in the Euclidean plane.
    A local robot model is used, where the robots especially only have a local vision and no compass.
    The gathering algorithm works in the $\mathcal{FSYNC}$ time model and runs in time $\calO(n^2)$.
    It is still unknown whether the model-specific time bound is tight for the general case, too.
    The second approach \cite{OptExactGatheringGrids2014} deals with gathering on a grid in the $\mathcal{ASYNC}$ time model.
    The robots have global vision and can move unbounded (finite) steps.
    The gathering strategy is optimal concerning the total number of movements.
    The third approach \cite{Jung2016} (accepted for IPDPS 2016, preliminary version \cite{Jung2015b}) is an asymptotically optimal gathering strategy for robots, connected as a closed chain on a grid, which is linear in the number of robots in the $\mathcal{FSYNC}$ time model.    
    Like the first approach \cite{gatheringthetanquadrat}, the robot model is local, especially only local vision and no compass.
    
	In this work, we use an idea from our gathering algorithm for a closed chain \cite{Jung2016}, yet drop the chain connectivity for sake of solving the general gathering on a grid under the same robot and time model.
    Our solution for the general gathering problem on the grid is a distributed synchronous algorithm that solves the problem asymptotically optimal in the total number of rounds.
    Note that the $\mathcal{FSYNC}$ time model is (like in \cite{Jung2016}) one of the main parameters that makes the construction and the analysis of an algorithm very challenging.
    The combination of preventing symmetry issues due the distributed robot behavior and constant length rounds require the presented complex solution.
    Contrary, if one would assume a fair scheduler in the $\mathcal{ASYNC}$ time model, which allows only one robot to be active at a time and finishes a round after every robot has been active at least once, a simple strategy could achieve the same $\calO(n)$ rounds.
    Specifically when compared to gathering on the Euclidean plane in the $\mathcal{FSYNC}$ time model, our runtime of $\calO(n)$ is a noteworthy result and beats the best known algorithm, which requires time $\calO(n^2)$ for gathering a swarm of $n$ robots.
    \paragraph{Our Local Grid Model}
        Our mobile robots need only simple capabilities:
        A robot moves on a two-dimensional grid and can change its position to one of its eight horizontal, vertical or diagonal neighboring grid points.
		A robot can see other robots only within a constant \emph{viewing radius} (measured in $L_1$-distance).
		We call the range of visible robots \emph{viewing range}.
        The robots controlled by our algorithm need $L_1$-distance \viewingradius, which can still be optimized. 
        The robots have no compass, no global control, no IDs and no global communication.
        A robot has a fixed small amount of memory to store a constant number of states.
		A robot can see the states of all robots inside the viewing range.

        Our algorithm uses the fully synchronous time model $\mathcal{FSYNC}$.
        Time is subdivided into equally sized rounds of constant lengths.
        In every round all robots simultaneously execute their operations in the common \emph{look-compute-move} model \cite{Cohen:2004a} which divides one operation into three steps.
        In the \emph{look step} the robot gets a snapshot of the current scenario from its own perspective, restricted to its constant sized viewing range.
        During the \emph{compute step}, the robot computes its action, and eventually performs it in the \emph{move step}.
	\paragraph{Outline of the Algorithm}
		The starting point for our problem is our strategy for the gathering of an arbitrary closed chain of robots on a grid \cite{Jung2016}.
		All operations to shorten the chain based on the strong definition of the robot's connectivity. 
		Each robot of a chain has exactly two well defined neighbors given by the input of the problem.

	    The main difference to the general gathering problem is that the connectivity of the chain is missing and so the algorithm could not be applied. 
        This makes general gathering more challenging than in the closed chain case.
		The idea of this paper is to find a substitute for the chain and the chain's connectivity in order to transmit the ideas of the algorithm.

		The substitute of the chain's connectivity is the following:
        We call two robots \emph{connected} if they are horizontal or vertical neighbors on the grid.
		This means that a robot is connected to at least one and at most four robots.
        Initially, under the restriction that the swarm is connected, the robots are arbitrarily distributed on the grid. 
        A swarm is connected, if the robots cannot be separated into two subsets such that no robot of the one subset is connected to any robot of the other subset and vice versa.

		In order to find a substitute for the chain we define the so called boundaries:
		The boundaries consist of all robots who have at least one unconnected side.
        The black robots of Figure~\ref{fig:boundary_ex} are part of the \emph{outer boundary} and the hatched robots are part of \emph{inner boundaries}.
		By that definition the swarm consists of one single outer boundary and typically multiple inner boundaries.
        A robot can detect if it is located on some boundary of the swarm, but because of its limited viewing range it does not know if it is located on the outer boundary or on some inner boundary.
        \begin{figure}[h]
            \centering
            \includegraphics[scale=\figscalee]{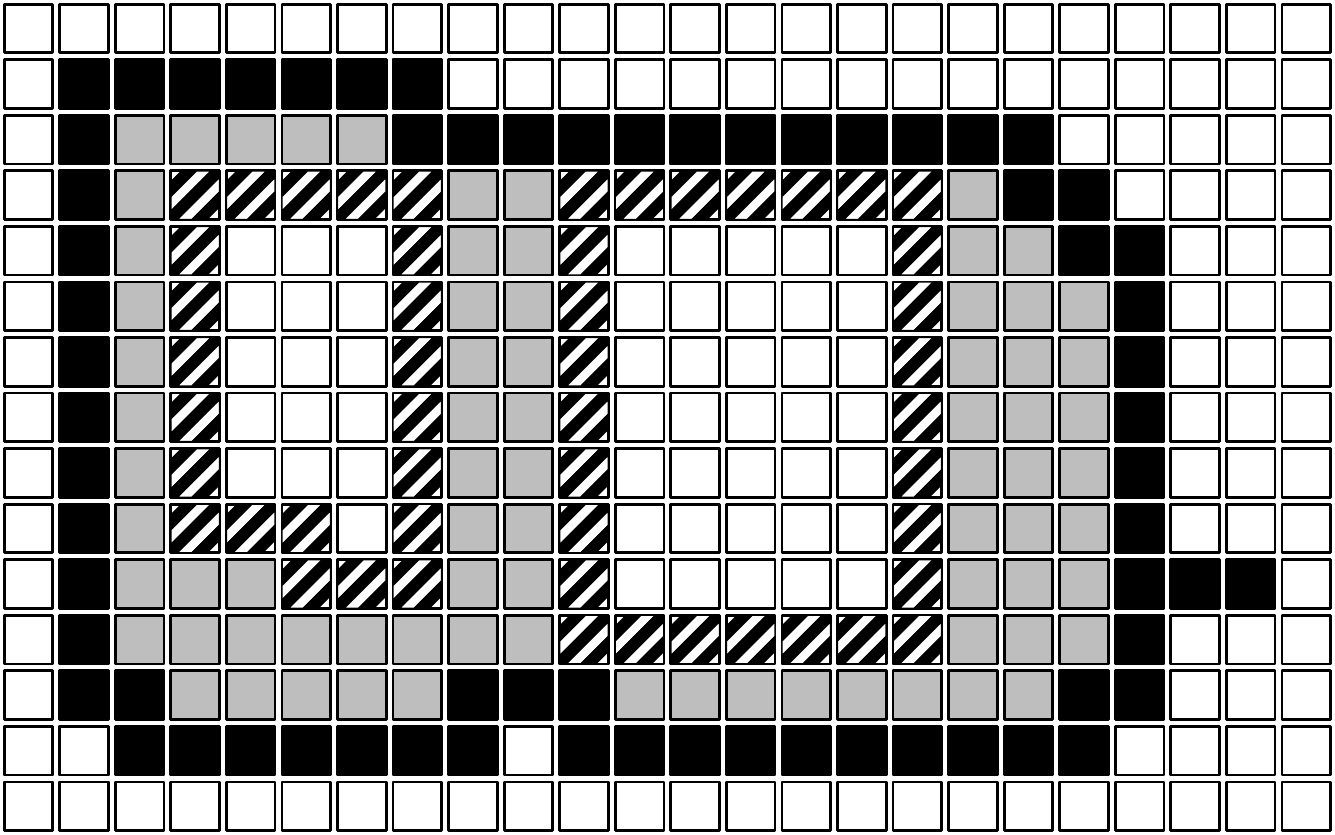}
            \caption{The hatched robots are part of \emph{inner boundaries} and the black robots are part of the \emph{outer boundary}.}
            \label{fig:boundary_ex}
        \end{figure}
		
        The outer boundary has the role as a closed chain and the goal of our algorithm is to shorten the outer boundary.
		However, the outer boundary does not have the same structure as the chain because the outer boundary can have multiple fringes.
		Thus our algorithm has to tackle two problems:
		First, we have to ensure that the algorithm will shorten the outer boundary.
		Second, we have to ensure that the shortening of the inner boundaries does not disturb the shortening of the outer boundary.
		Because the robots can not distinguish between inner and outer boundary our algorithm will also shorten the inner boundary.

        The main idea for our algorithm is the following:
        Our algorithm achieves progress in the gathering by performing \emph{merge operations}.
        A merge operation is a \emph{hop} of a robot located on the boundary onto the same grid cell as one of its neighbors and we remove one of them.
        A merge operation is only allowed if it does not disconnect the swarm.
        Sometimes no merge is possible without disconnecting the swarm.

		In that case we want the robots of the boundary to reshape the swarm.
		Reshapement means that the robots reach positions such that new merges are possible, which does not break the connectivity.
		Reshapement operations consist of two types: Starting a reshapement and continuing a reshapement.
		
		Starting a reshapement is allowed only to a robot that is located on a boundary and can see a certain configuration of robot positions in its viewing range.
		The robots that are allowed to start a reshapement generate a so called \emph{run state}, hop to another position, and give the run state to neighbors on the boundary.
		Robots receiving the run state are called \emph{runners} are allowed to continue the reshapement by hopping in following rounds and giving the state to neighbors of the boundary.
		In that way the run state will be moved along the boundary by the runners.
		If two runner converge and see each other inside their viewing range a merge can be performed.
        Gathering is finished when all robots are located within a $2\times 2$ square, since in our model that situation cannot be simplified anymore.
		In order to simplify the description of our algorithm, we sometimes allow a robot to reconfigure the positions of the robots in its viewing range.

        For counting the progress of our algorithm, it is sufficient to consider the merges of the outer boundary only, i.e. we look at the shortening of the outer boundary.
		After a constant number of rounds either a merge is possible or a new run state will be generated.
		A new generated run state will provide at least two runners that converge and enable a merge after $\calO(n)$ rounds.
 		The runners move in parallel and our strategy ensures that the runners do not disturb each other.
		Different started run states result in different merges.
        Therefore the algorithm gathers a swarm of $n$ robots in time $\calO(n)$.
        Our result is asymptotically optimal for worst-case swarms.
\section{Related work}
    There is a vast literature on robot problems, researching how specific coordination problems can be solved by a swarm of robots given a certain limited set of abilities.
    The robots are usually point-shaped (hence collisions are neglected) and positioned in the Euclidean plane.
    They can be equipped with a memory or are \emph{oblivious}, i.e., the robots do not remember anything from the past and perform their actions only on their current views.
    If robots are anonymous, they do not carry any IDs and cannot be distinguished by their neighbors.
    Another type of constraint is the compass model:
    If all robots have the same coordinate system, some tasks are easier to solve than if all robots' coordinate systems are distorted.
    In \cite{gathering-compasses,Izumi2012} a classification of these two and also of dynamically changing compass models, as well as their effects regarding the gathering problem in the Euclidean plane, is considered.
    The operation of a robot is considered in the \emph{look-compute-move model} \cite{Cohen:2004a}.
    How the steps of several robots are aligned is given by the \emph{time model}, which can range from an asynchronous $\mathcal{ASYNC}$ model (for example, see \cite{Cohen:2004a}), where even the single steps of the robots' steps may be interleaved, to a fully synchronous $\mathcal{FSYNC}$ model (for example, see \cite{localgathering}), where all steps are performed simultaneously.
    A collection of recent algorithmic results concerning distributed solving of basic problems like gathering and pattern formation, using robots with very limited capabilities, can be found in \cite{flocchinioverview}.
   
    One basic robot formation problem are the shortening and maintainance of a communication chain between two fixed endpoints.
    The robots then only have a local vision, no compass and a static connectivity is defined by neighborhoods along the chain.
    For this several results have been published in the $\mathcal{FSYNC}$ time model.
    The first shown runtime bound was $\calO(n^2\log(n))$ \cite{gtm}.
    Later, this has been improved \cite{hopper}:
    In the Euclidean plane, the \emph{Hopper} strategy delivers a $\sqrt{2}$-approximation of the shortest communication chain in time $\calO(n)$.
    Restricted to a grid, the \emph{Manhattan Hopper} strategy delivers an optimal solution in time $\calO(n)$.

    One of the most natural problems is to gather a swarm of robots in a single point.
    Usually, the swarm consists of point-shaped, oblivious, and anonymous robots. The problem is widely studied in the Euclidean plane.
    Having point-shaped robots, collisions are understood as merges/fusions of robots and interpreted as gathering progress \cite{MINCH}.
    In \cite{gathering-icalp} the first gathering algorithm for the $\mathcal{ASYNC}$ time model with multiplicity detection (i.e., a robot can detect if other robots are also located at its own position) and global views is provided.
    Gathering in the local setting was studied in \cite{localgathering}.
    In \cite{impossibilityofgathering} situations when no gathering is possible are studied.
    The question of gathering on graphs instead of gathering in the plane was considered in \cite{practicalrendevouzaktuell, rendezvousingraphen, gatheringOnRing}.
    In \cite{Stefano2013} the authors assume global vision, the $\mathcal{ASYNC}$ time model and furthermore allow unbounded (finite) movements.
    They show optimal bounds concerning the number of robot movements for special graph topologies like trees and rings.
    
    Concerning the gathering on grids,
    in \cite{gatheringongrids} it is shown that multiplicity detection is not needed and the authors further provide a characterization of solvable gathering configurations on finite grids.
    In \cite{OptExactGatheringGrids2014}, these results are extended to infinite grids, assuming global vision.
    The authors characterize \emph{gatherable} grid configurations concerning exact gathering in a single point.
    Under their robot model and the $\mathcal{ASYNC}$ time model, the authors present an algorithm which gathers \emph{gatherable} configurations
    optimally concerning the total number of movements.

    Assuming only local capabilities of the robots, esp.\ only local vision and no compass, makes gathering challenging.
    For example, a given global vision or alternatively just the knowledge of a global compass,
    the robots could compute the center of the globally smallest enclosing square or circle and just move to this point (global vision)
    or all robots without any local neighbors in front of them could simply move for example to the south-eastern direction
    and would finally meet (global compass).

    In the $\mathcal{FSYNC}$ time model, the total running time is a quality measurement of an algorithm.
    In this time model and additionally under the restriction, that the robots do not have a compass and have only local vision, i.e., they can only see other robots up to a constant distance instead of a global vision of the whole scenery, exist several results that prove runtime bounds.
    One of these are the strategies for shortening communication chains, we introduced above.
    I.e., \cite{gtm} that needs time $\calO(n^2\log(n))$ and \cite{hopper} that needs time $\calO(n)$ in the Euclidean plane as well as on a grid.

    Our recent result gathers a closed chain on a grid asymptotically optimal in linear time $\calO(n)$ \cite{Jung2016}, \cite{Jung2015b}.
    There, connectivity is statically given by a chain structure.
    For the more general gathering, i.e., using connectivity dynamically given by local vision in the Euclidean plane, an $\calO(n^2)$
    runtime bound has been shown \cite{gatheringthetanquadrat}.
    There, every robot synchronously computes the smallest enclosing circle only of the robots within its restricted viewing range and then moves towards its center.
    Repeatingly executing this synchronous behaviour finally solves the gathering.
    The authors also prove that for their algorithm the $\calO(n^2)$ bound is tight.
    For the problem itself, under this local model, a tight bound for the running time is still unknown.

    In the present paper, we tighten this bound to the asymptotically optimal value of $\calO(n)$ using the same time- and robot model, but for robots on a grid.
\section{The algorithm}
    In our model, merges/removal of robots mean progress of the gathering.
    Then, at the latest after $n-1$ removals the gathering is done.
    In our model, a robot can be removed, if two robots are located at the same grid cell.
    Then, one of them is removed.

    Because a robot's viewing range is restricted to just a constant size, these removals cannot be performed easily in general.
    Our algorithm performs two basic operations.
    More precisely, we have to deal with two cases:
    \begin{enumerate}
        \item Some neighboring robots perform a single hop such that afterwards at least two robots are located at the same position, while preserving the swarm connectivity. Then, we remove one of them.
        This is a so called \emph{merge} and further discussed in Subsection~\ref{ssec:merges}.\label{enum:easiershortening}
        \item If on some parts of the swarm merges are impossible, we perform so called reshapements to reshape the swarm in order to prepare merges in succeeding steps.
        They are explained in Subsection~\ref{ssec:runs}.
    \end{enumerate}
    In Subsection~\ref{ssec:merges}, we now start with case \ref{enum:easiershortening}.
    For the sake of simpler descriptions, we use terms like \emph{horizontal, vertical, downwards, left,\ldots}
    Since our robots do not have a common sense for this, the descriptions/figures are also to be understood in a mirrored or rotated manner.
    \subsection{Merges}\label{ssec:merges}
        A \emph{merge-operation}, or simpler a \emph{merge}, is the simultaneous local operation of a sequence of neighboring robots that merges at least one robot, while not harming the overall connectivity of the swarm.
        There are different merge operations, all parametrized by a parameter $k$, as depicted in Figure~\ref{fig:ALG_merge2}.
        In the figure, the case $k=1$ denotes the simplest variant, where only a single robot hops onto a grid cell occupied by another robot.
        Because the robot's viewing radius is $>1$, namely $\viewingradius$, also bigger merges are possible within the local vision.
        Then, for not to breaking the swarm's connectivity, multiple robots (the black ones in the figure) have to hop simultenenously:
        Using the local information about the occupied and empty cells in a viewing range, every robot can decide individually, based on its local knowledge, if it participates in the simultaneous merge operation or not.
        Specifically, the marked white grid cells must not contain any robot, any further marked cell must contain a robot, and the maximal size $k$ of a merge configuration is limited by the viewing radius.
        Moreover, all not explicitly depicted cells are ignored for the decision.

        The $k$ robots that constitute a merge form a subboundary of the swarm.
        When operating, these subboundary robots simultaneously hop one grid cell in the same direction (in the illustration, this means downwards).
        The swarm's connectivity is ensured, since a merge is only performed when the depicted white cells are empty.
        By requiring at least one grey cell, i.e., a robot that does not move, after the hops of the subboundary robots at least one robot from a grey cell will be located at the same cell as a robot from a formerly black cell and hence one robot is merged.
        \begin{figure}[h]
        \centering
            \includegraphics[scale=\figscale]{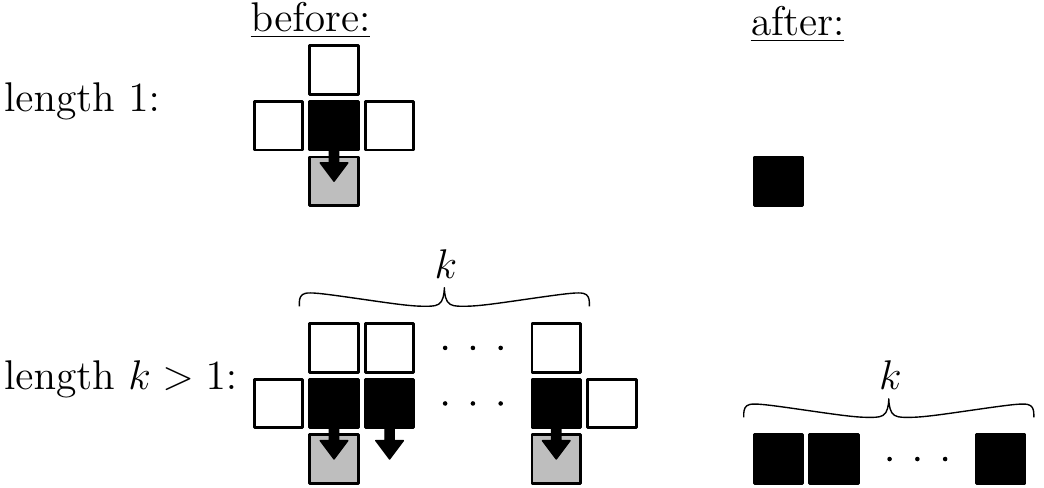}
            \caption{Robot subboundaries that allow progress hops (\emph{merge operations}). The value of the \emph{length} $k$ indicates the number of robots (the black ones) of the subboundary. $k$ is upper bounded by a robot's constant viewing radius. The operations are only performed, if the cells, marked by white squares, are empty. Else, the swarm's connectivity might break.}
            \label{fig:ALG_merge2}
        \end{figure}

        Possibly, several simultaneous merges can occur at different parts of the swarm.
        The problematic cases we have to discuss are those, where two of the robot subsets participating in the merge operations, i.e., the
        subboundaries, consisting of the black and grey robots of Figure~\ref{fig:ALG_merge2}, overlap (cf.\ Figure~\ref{fig:symmerges}).
        Precisely, two cases need a closer look (We refer to the black and grey robots of Figure~\ref{fig:ALG_merge2}.):
        \begin{enumerate}
            \item The beginnings and endings of the subboundaries overlap by two robots.\label{enum:oneoverlap}
            \item The beginnings and endings of the subboundaries overlap by three robots.\label{enum:twooverlap}
        \end{enumerate}
        An example for \ref{enum:oneoverlap}) can be seen in Figure~\ref{fig:symmerges}.$a)$:
        The robots of the subboundaries 1,2,3 all perform the hop as the black ones during the merge operation (Figure~\ref{fig:ALG_merge2}).
        The difference is that afterwards for example the robot $a$ is not located at the same position as $b$ and vice versa.
        So, no robot can be removed.
        But the outermost grey robots of Figure~\ref{fig:symmerges}.$a)$ do not move and there a merge occurs.

        An example for \ref{enum:twooverlap}) is shown in Figure~\ref{fig:symmerges}.$b)$.
        Here, the robot $r$ belongs to the both subboundaries $1$ and $2$.
        I.e, concerning subboundary $1$ $r$ would hop downwards, but concerning subboundary $2$ it would hop to the left.
        In this case, $r$ performs a diagonal hop to the lower left, while the other robots perform their usual hops.
        Afterwards, $r,a,b$ occupy the same grid cell and $a,b$ are removed without breaking the connectivity.
        \begin{figure}[h]
        \centering
            \includegraphics[scale=\figscale]{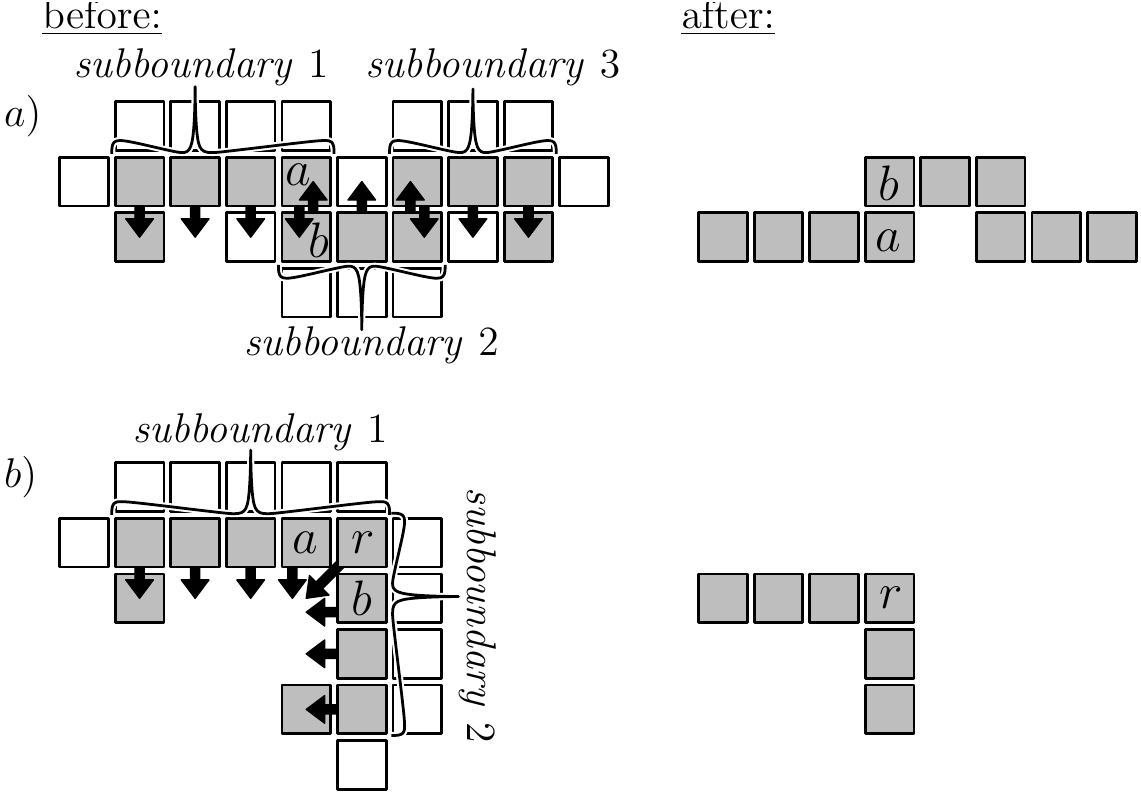}
            \caption{Overlapping merges. (significant examples)}\label{fig:symmerges}
        \end{figure}
    \subsection{Reshapement of the Swarm by Runners}\label{ssec:runs}
        From the global perspective, if globally nowhere in the swarm the algorithm's local merge operation is possible, then we call it a \emph{Mergeless Swarm}.
        In this case, our goal is to perform certain \emph{reshapements} of the swarm's outer boundary in order to make merge operations possible.
        Still considering the swarm from a global perspective, if there is a subboundary as depicted in Figure~\ref{fig:goodpair_ex_simple_norunners} by the black robots, then we continuously let an outermost robot of this subboundary perform the depicted diagonal hops.
        Eventually, this will shorten the subboundary enough to allow a (local) merge operation (cf.\ Figure~\ref{fig:ALG_merge2}).
        We say, these hops \emph{reshape} the subboundary.

        For an individual robot with its limited viewing range this raises some challenges.
        \begin{enumerate}
            \item When a robot decides to start the reshapement, within its restricted viewing range it does not know if its task of performing diagonal hops will lead to merges (In Figure~\ref{fig:goodpair_ex_simple_norunners}, this is the left outermost black one in round $i$.).
            \item In the following rounds, due to the local viewing ranges and skewed coordinate systems, the local view of an outermost black robot may be the same as that of its grey neighbor.
                So, we have to ensure that the reshapement is continued by the outermost black robots instead of by their grey neighbors.
            \item When robots decide to start the reshapement, symmetries of the subboundary's local shape may lead to breaking of the swarm's connectivity.
                In Figure~\ref{fig:symbreakrunstarts} this is the case if both $r$ and $r'$ start the reshapement.
        \end{enumerate}
        \begin{figure}[h]
        \centering
            \includegraphics[scale=\figscale]{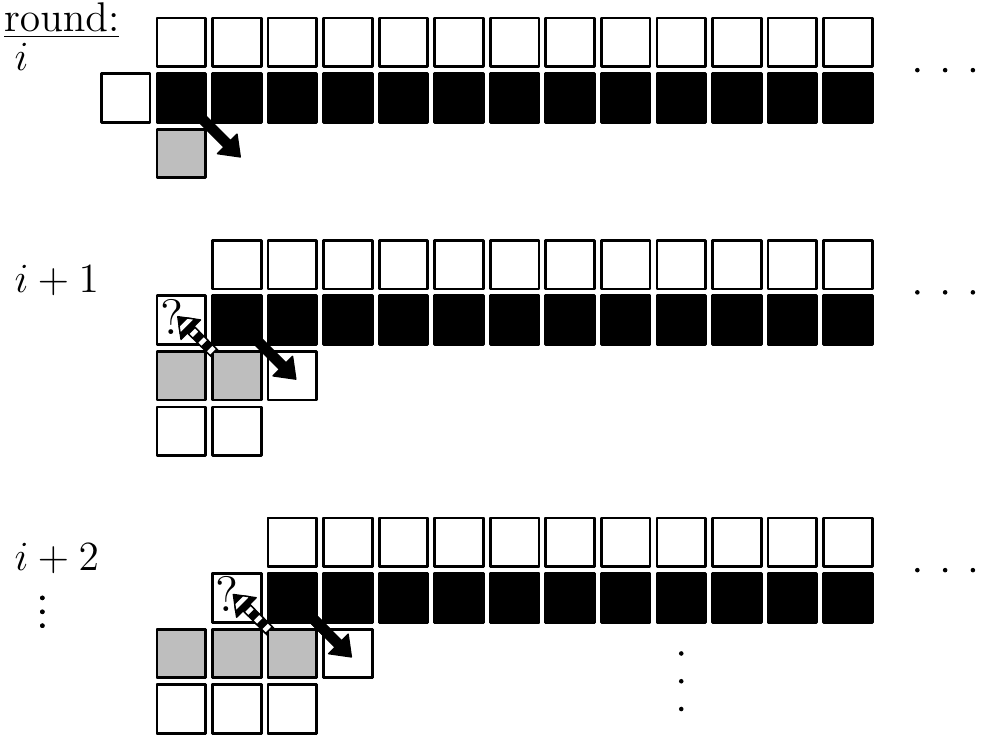}
            \caption{If the length of the black subboundary (i.e., the value of $k$ in Figure~\ref{fig:ALG_merge2}) is larger than the robots' viewing radius, we shrink it by letting one or both of the outermost black robots perform diagonal hops.
            We indicate such hops by diagonal arrows.}
            \label{fig:goodpair_ex_simple_norunners}
        \end{figure}
        \begin{figure}[h]
        \centering
            \includegraphics[scale=\figscale]{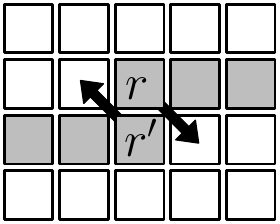}
            \caption{If $r$ and $r'$ both start reshaping the subboundary, the connectivity might break.}
            \label{fig:symbreakrunstarts}
        \end{figure}
        In the following, we tackle these challenges by introducing certain (local) states of a robot, which we call the \emph{run} state.
        We call a robot with an active run state a \emph{runner}.
        Robots can achieve this state in two different ways:
        \begin{description}
            \item [start runstate:] If the local subboundary within a robot's viewing range has a certain configuration, i.e., the relative positions of other robots and empty cells, then the robot decides on its own to generate the run state.
                We say, such a robot starts a run.
                Based on the configuration of the local subboundary, the run state gets a fixed moving direction along the boundary.
                A robot can start and store up to two run states at the same time.
                Figure~\ref{fig:runstartingrobots} shows how the local configurations must look like.
            \item [move runstate:] A runner $R(S)$ can move the run state $S$ to its boundary neighbor $r'$ in moving direction of $S$.
            We say, the run state has moved from $R(S)$ to $r'$, while its in ``start runstate'' initially set moving direction always remains unchanged. Afterwards, $r'$ is identified by $R(S)$.
        \end{description}
        Once a run state has been started in ``start runstate'', ``move runstate'' is executed in every of the following rounds.
        This means that the run moves along the boundary at constant speed and in the initially settled moving direction.

        When starting runs, the shapes of Start-A and Start-B in Figure~\ref{fig:runstartingrobots} ensure that the run starts cannot break the swarm's connectivity.
        For this, in a situation like in Figure~\ref{fig:symbreakrunstarts} we do not start any runs.
        We name subboundaries, consisting of shapes like in the figure, \emph{quasi lines}.
        Figure~\ref{fig:quasiline_ex2} gives an example of a quasi line.
        \begin{figure}[h]
            \centering
            \includegraphics[scale=\figscale]{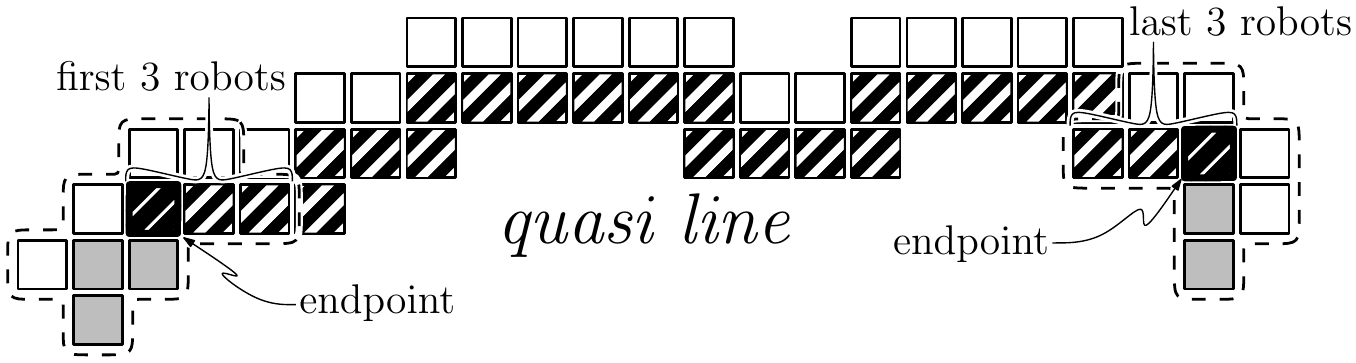}
            \caption{Example of a quasi line. The fat robots are its endpoints.}
            \label{fig:quasiline_ex2}
        \end{figure}
        \begin{definition}[quasi line]\label{def:quasiline}
            We call a subboundary a horizontal \emph{quasi line}, if the following points hold:
            \begin{enumerate}
                \item At least its first and last three robots are horizontally aligned.
                \item All its subboundaries of horizontally aligned robots contain at least three robots.
                \item All its subboundaries of vertically aligned robots contain at most two robots.
            \end{enumerate}
            In a Mergeless Swarm, at both ends of a quasi line a run starting subboundary Start-A or Start-B of Figure~\ref{fig:runstartingrobots} in a matching rotation or reflection occurs. (If the swarm is not mergeless, then the subboundaries outside the quasi line's endpoints may also have other shapes than these.)

            The definition of a vertical \emph{quasi line} follows analogously.
        \end{definition}
        \begin{figure}[h]
            \centering
            \includegraphics[scale=\figscale]{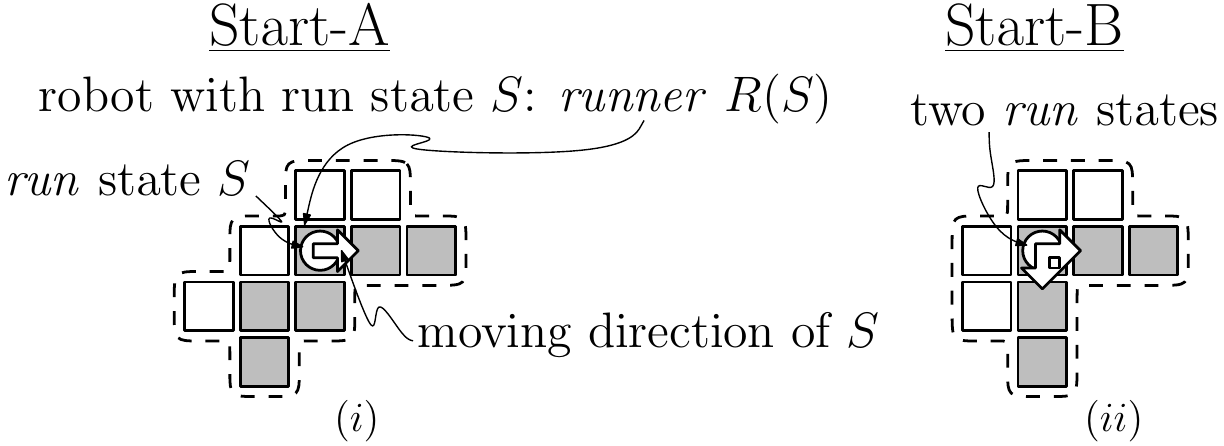}
            \caption{Run starting subboundaries: The robots marked by white circles decide to start the run states, only based on the relative positions of the marked robots and empty cells. Here, grey squares denote robots, while white squares denote cells that must be empty.
            The white arrows indicate the moving direction of the runs.
            This is the notation, we will use for marking a runner. $R(S)$ identifies the runner/robot which currently has the run state $S$;
            In $(ii)$, Start-B, the robot marked by the circle is the endpoint of a horizontally and a vertically aligned subboundary at the same time.
            Here, we must start two runs, moving in both directions along the boundary.}
            \label{fig:runstartingrobots}
        \end{figure}
        We let runs start at endpoints of quasi lines.
        We let them move along such quasi lines and, while doing this, perform reshapements of the boundary.
        For this, depending on the local shape of the subboundary, we require the run operations, shown in Figure~\ref{fig:ALG_hop_simple2}.
        \begin{figure}[h]
        \centering
            \includegraphics[scale=\figscale]{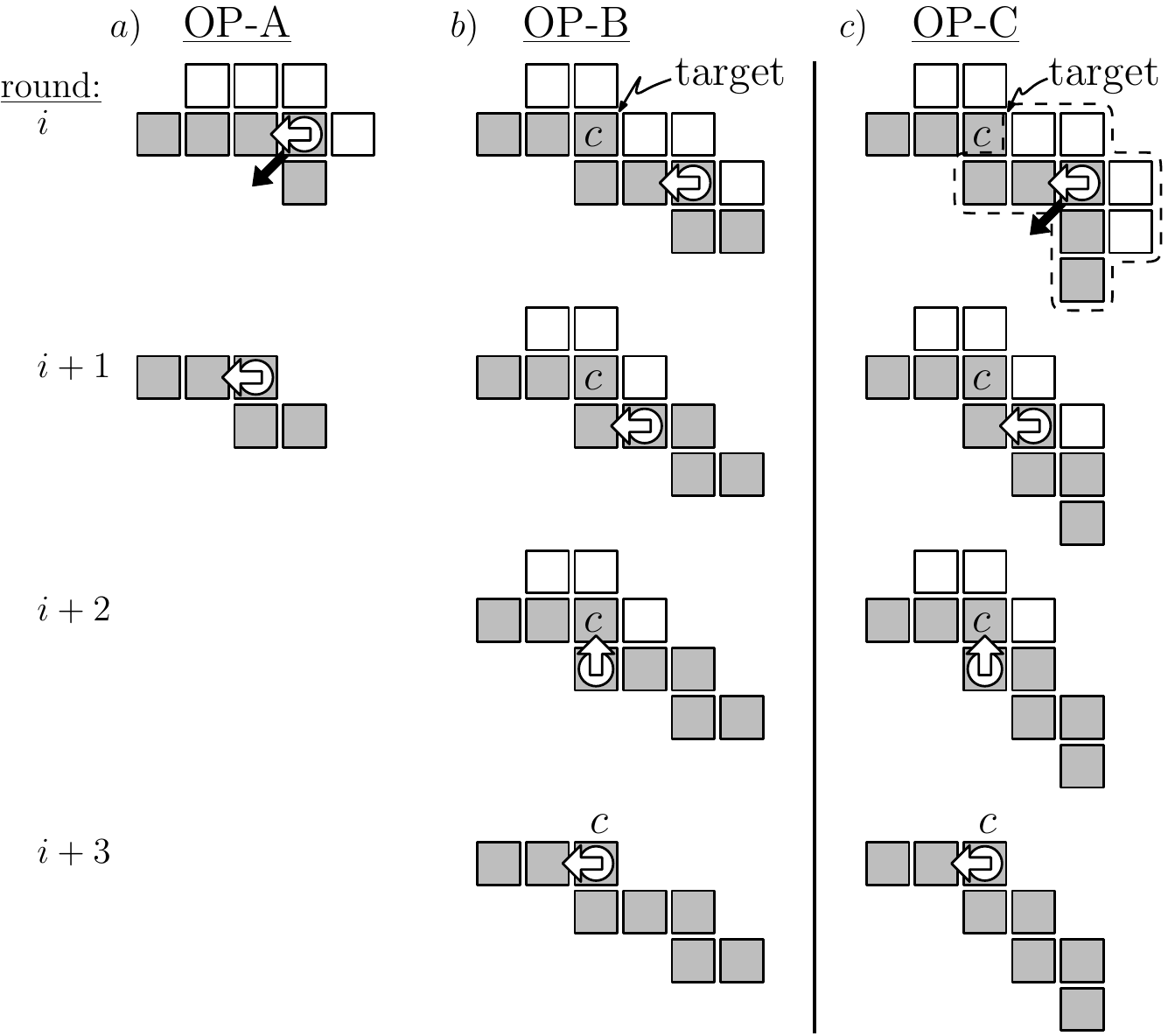}
            \caption{$a)$ OP-A: Reshapement by a runner. The operation takes only one round. $b)$ OP-B: No diagonal hops are performed until the target corner $c$ is reached. $c)$ OP-C: Special case when new runs start: First, perform one diagonal hop, then no diagonal hops until the target corner $c$ is reached.}
            \label{fig:ALG_hop_simple2}
        \end{figure}
        \begin{itemize}
            \item[$a)$] OP-A: The runner and at least the next 3 robots are located on a straight line. Here, the runner first performs a diagonal hop, then moves the run to the next robot.
            \item[$b)$] OP-B: The runner and only the next 2 robots are located on a straight line. Then, for 3 times the runners just move the run to the next robot without any diagonal hops. Afterwards, it is located at the target corner $c$.
            \item[$c)$] OP-C: This one is needed at most once for a new run, if started at the run starting subboundary Start-B (Figure~\ref{fig:runstartingrobots}.$(ii)$).
        \end{itemize}
        These operations shall finally enable merges.

        Two properties let multiple runs be active in parallel.
        \begin{enumerate}
            \item Usually, at different positions, the whole swarms contains multiple run starting subboundaries (cf.\ Figure~\ref{fig:runstartingrobots}) at the same point in time.
            \item No matter, if previously started runs are still active, every constant number of $L=\pipelininginterval$ rounds all robots simultaneously check if they can start new runs (cf.\ Figure~\ref{fig:runstartingrobots}) and if so, they do so.
        \end{enumerate}
        So, we have to deal with two runs, meeting on a quasi line.
        Then, we distinguish two cases (Cf.\ Figure~\ref{fig:runmeetingcases}.).
        \begin{figure}[h]
        \centering
            \includegraphics[scale=\figscale]{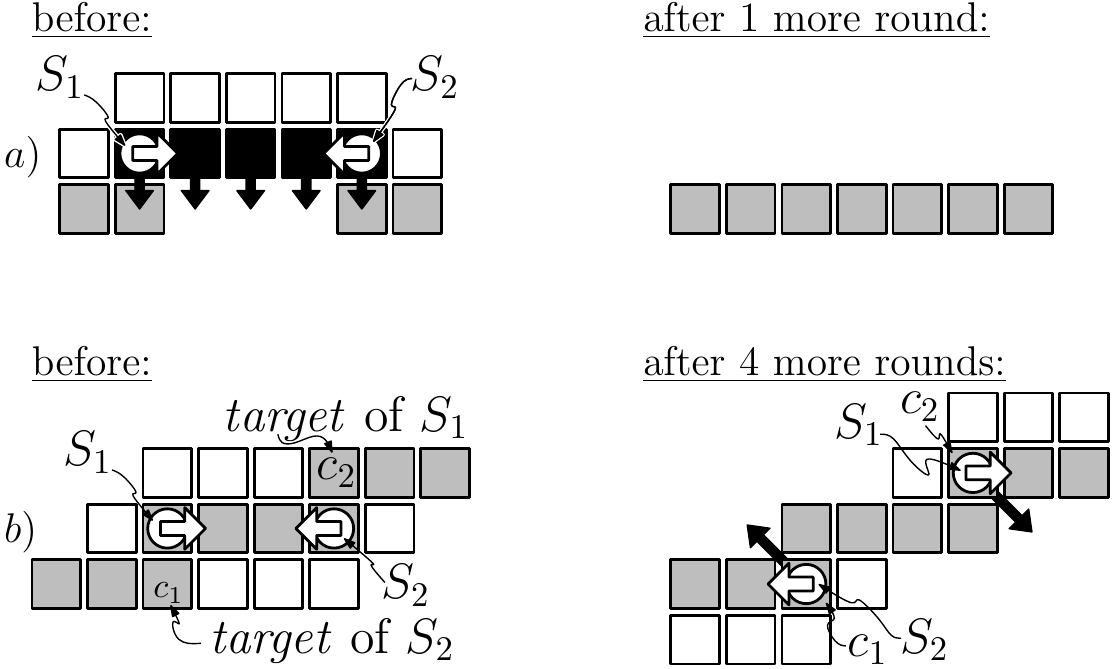}
            \caption{$a)$ \emph{Merge} operation: The reshapements of the runs $S_1,S_2$ have enabled a merge. This merge is performed and both are stopped. The operation takes only one round.
            $b)$ \emph{Run passing} operation: The runs $S_1,S_2$ cannot enable a merge.
            If their distance along the boundary is less or equal $\crossdistance$, then they pass each other by only keeping moving but without making the runners perform diagonal hops.
            Afterwards, i.e., when $S_1,S_2$ have reached their target robots/corners $c_2,c_1$, they return to normal operation.}
            \label{fig:runmeetingcases}
        \end{figure}
        $a)$ This run pair has enabled the desired merge.
        $b)$ This pair is oriented in a way that does not enable a merge.
        We let the runs of such pairs pass along each other.
        Figure~\ref{fig:runmeetingcases}.$b)$ shows for the case that both runs are located on the same quasi line how this is performed:
        At the time when their distance along the boundary (i.e., the number of robots on the subboundary connecting both $+1$) is $\crossdistance$ or less, they only keep moving along the boundary, but the runners do not perform reshapement hops.
        We call $\crossdistance$ the \emph{run passing distance}.
        This is repeated until $S_1$ is located at its target robot $c_2$ (then also $S_2$ is located at its target $c_1$).
        We call this the \emph{run passing} operation.
        Afterwards, the normal reshapement operations are continued.
        In Section~\ref{sec:runpassingcomplete}, we provide a full explanation of all possible situations of needed run passing operations.
        While in the easy case a viewing radius of $\viewingradiusident$ is the minimum value that suffices, we then need the unoptimized value of $\viewingradius$.
    \subsection{Stopping Runs}
        \begin{figure}[h]
        \centering
            \includegraphics[scale=\figscale]{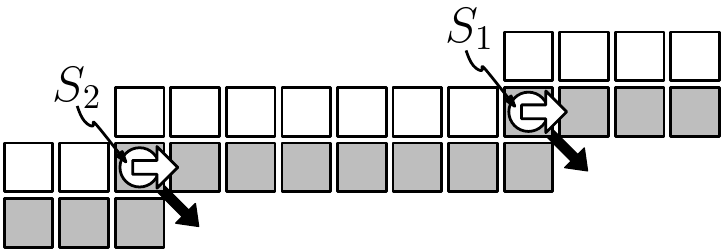}
            \caption{The \emph{distance} between $S_1,S_2$ is $8$.}
            \label{fig:runtermsnew}
        \end{figure}
        We use the example of Figure~\ref{fig:runtermsnew} for introducing some terms when speaking about runs:    
        Because $S_1, S_2$ are moving in the same direction, we call them \emph{sequent} runs, while relative to their moving direction
        $S_1$ is located \emph{in front of} $S_2$.
        The \emph{distance} between them is defined as the number of robots on the subboundary connecting both $+1$.
        We say, $S_1$ is \emph{visible} to $S_2$, if the \emph{distance} between them is $\leq \viewingradius$.

        In order to ensure that runs, being active at the same time, can work correctly, some of them have to stop in certain situations.
        More precisely, we let a runner $R(S)$ stop/terminate its run $S$, if one of the conditions of Table~\ref{table:runterminate} is true.
        \begin{table}
            \fbox{\parbox{0.97\columnwidth}{
                A runner stops/terminates its run, if at least one of the following conditions is true:
                \begin{enumerate}
                    \item It can see the next sequent run in front of it (This happens if sequent runs have come too close to each other, e.g., because of merge operations.).\label{enum:stoptooclose}
                    \item It can see the quasi line's endpoint in front of it.\label{enum:stopendpoint}
                    \item It was part of a merge operation.\label{enum:stopmerge}
                    \item While it performs the run passing operation (Figure~\ref{fig:runmeetingcases}.$b)$), 
                    some change of the subboundary's shape prevents the operation from being completed successfully. (This can happen because of a merge.)\label{enum:stoppass}
                    \item While it performs the operation OP-B or OP-C (Figure~\ref{fig:ALG_hop_simple2}), some change of the subboundary's shape prevents the operation from being completed successfully. (This can happen because of a merge.)\label{enum:stoplongop}
                    \item While it performs the operation OP-A or OP-C, it has hopped onto an occupied cell. (Then, one of both robots is removed.)
                \end{enumerate}
            }}
            \caption{Conditions which let a run terminate.}
            \label{table:runterminate}
        \end{table}

        We give some more detailed explanations concerning some of the conditions of Table~\ref{table:runterminate}:
        \begin{itemize}
            \item[\ref{enum:stoptooclose})]
                For example, because of merges two sequent runs may come too close to each other, which then might hinder the pipelining.
                The affected runners can detect this on their own.
                The criterion for this is that the next sequent run in front of them becomes visible.
                Then, the termination condition \ref{enum:stoptooclose}) matches and the run behind stops.
            \item[\ref{enum:stoppass})]
                We assume, that in Figure~\ref{fig:runmeetingcases}.$b)$ to the right of robot $c_2$ another run $S_3$, moving in the same direction as $S_2$, is located.
                During the rounds in which $S_1$ and $S_2$ are performing their run passing operation, $S_3$ keeps moving towards $c_2$.
                Now, for example, it may happen that because of the reshapements of $S_3$, $c_2$ becomes part of a merge operation.
                Then, $c_2$ would hop downwards such that the corner shape does not exist anymore.
                Because this corner has been the target of $S_1$, $S_1$ could not continue its reshapements after the run passing, so it terminates.
        \end{itemize}
        Summarizing, all robots synchronously execute the algorithm, shown in Figure~\ref{fig:algo}.
        \begin{figure}[h]
            \fbox{\parbox{0.97\columnwidth}{
                Every robot $r$ every round checks the following three steps:
                \begin{enumerate}
                    \item \underline{Merge:} If $r$ detects a possible merge within its viewing range then
                        \begin{itemize}
                            \item if $r$ is one of the black robots in Figure~\ref{fig:ALG_merge2}, it hops downwards.
                            \item if afterwards $r$ is located at the same position as one of the grey robots, the grey one is removed without breaking the connectivity.
                        \end{itemize}
                    \item \underline{Run Operations:} If $r$ is a \emph{runner}, then
                        \begin{enumerate}
                            \item Its run terminates/stops if any of the conditions of Table~\ref{table:runterminate} is true.
                            \item Runner's Movement and Reshapement
                                \begin{itemize}
                                    \item Run passing:
                                        \begin{itemize}
                                            \item If $r$ is currently in progress of executing the run passing operation (Figure~\ref{fig:runmeetingcases}.$b)$), then this operation is continued.
                                            \item Else, if $r$ can see a run in front of it, such that both are moving towards each other and the distance between them is less or equal than the \emph{run passing distance}, then $r$ starts the run passing op.
                                        \end{itemize}
                                    \item If $r$ is not in progress of passing, then
                                        \begin{itemize}
                                            \item If $r$ is in progress of executing a run operation OP-B or OP-C (cf.\ Figure~\ref{fig:ALG_hop_simple2}), which take more than one round, then this one is continued.
                                            \item Else: $r$ executes the matching new run operation OP-A, OP-B or OP-C.
                                        \end{itemize}
                                \end{itemize}
                        \end{enumerate}
                    \item \underline{Start new runs:} Every $(L=\pipelininginterval)$th round, $r$ checks if it can start a new run:\\
                        If $r$ is one of robots marked by circles, of the run starting subboundaries Start-A or Start-B (cf.\ Figure~\ref{fig:runstartingrobots}), then it starts one resp.\ two runs.
                \end{enumerate}
            }}
            \caption{The algorithm.}
            \label{fig:algo}
       \end{figure}

\section{Why the strategy produces progress in gathering}
    For measuring progress, we only consider runs, started pairwise at both ends of certain quasi lines.
    The kind of runs pairs which in any case lead to a merge, are called \emph{good pairs}.
    \subsection{Good pairs}\label{ssec:goodpairs}
        Assume a newly started run pair, connected by a quasi line.
        We call this pair a \emph{good pair}, if the following is true (cf.\ Figure~\ref{fig:goodpair_ex1}):
        \begin{figure}[h]
        \centering
            \includegraphics[scale=\figscale]{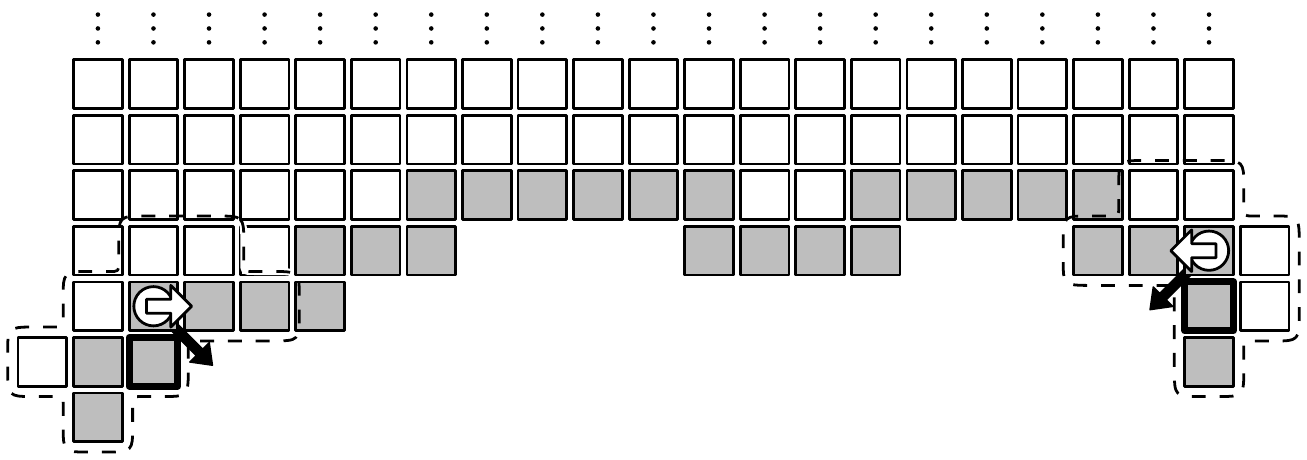}
            \caption{Pair of runs, connected by a quasi line. The runs are a \emph{good pair} if the outer subboundary neighbors (the fat robots) are both located downwards and the globally the whole area above does not contain any robots.}
            \label{fig:goodpair_ex1}
        \end{figure}
        The exterior neighbors (the fat bordered robots in the figure) of the run pair are both located downwards
        and the whole (global) area above the connecting quasi line does not contain any robot
        (The same definition analogously holds for any rotation.).
        Looking at our run operations OP-A, OP-B and OP-C as well as the run passing operation and merges, we notice that
        none of them could let a robot which is not part of the considered quasi line hop into this empty area.

        Such good pairs always enable a merge if they have been moving close enough together and then terminate.
        Figure~\ref{fig:runmeeting_simple} shows an example of a good pair where the shape of the quasi line connecting both, is a straight
        line.
        The example of the figure shows that using runs, we solve the problem, we have had in Figure~\ref{fig:goodpair_ex_simple_norunners}, where it was impossible to locally decide by the robots, which of them has to perform the reshapement hop.
        \begin{figure}[h]
        \centering
            \includegraphics[scale=\figscale]{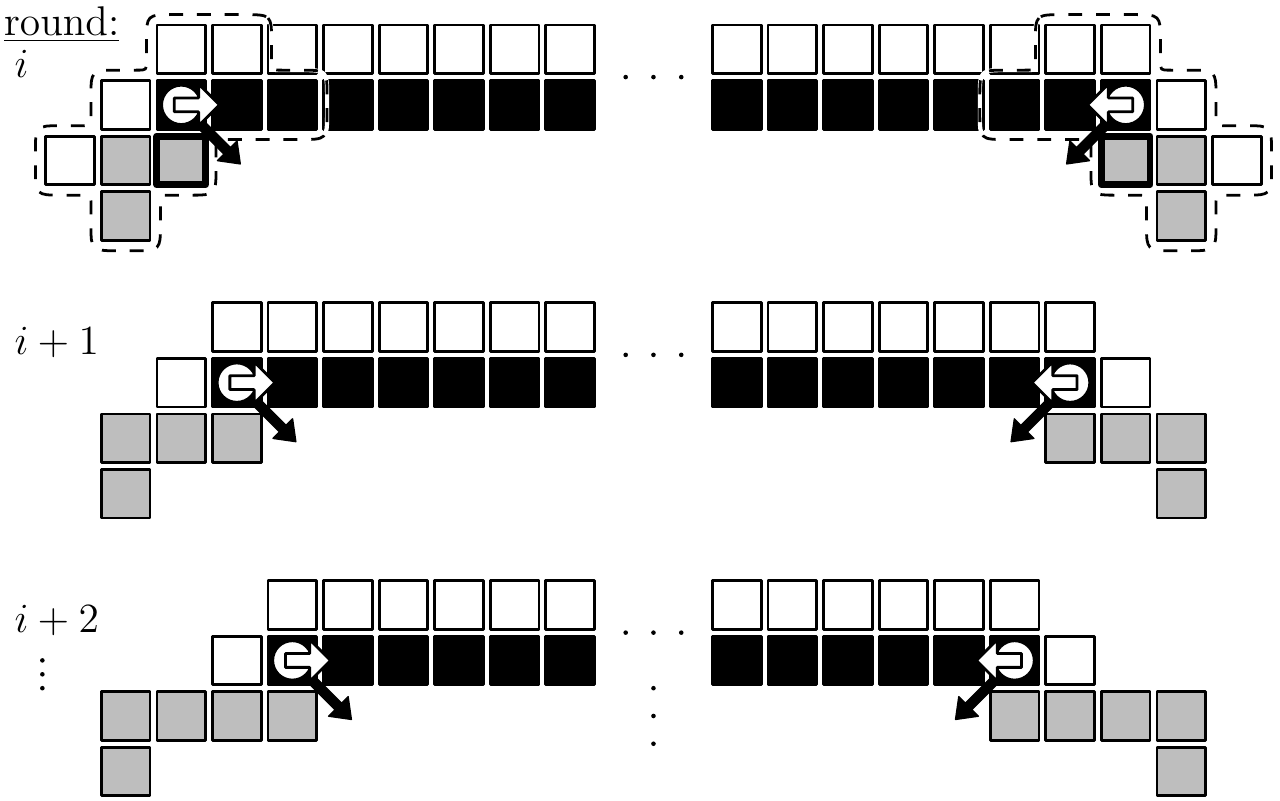}
            \caption{A \emph{good pair} of runs. In round $i$, the new runs start (cf.\ Figure~\ref{fig:runstartingrobots}).
            Afterwards, the action OP-A of Figure~\ref{fig:ALG_hop_simple2} are repeatingly executed.
            The runs are moving closer and closer together.
            If the runs have moved close enough, a merge can be performed.}
            \label{fig:runmeeting_simple}
        \end{figure}
        \begin{figure}[h]
            \centering
            \includegraphics[scale=\figscale]{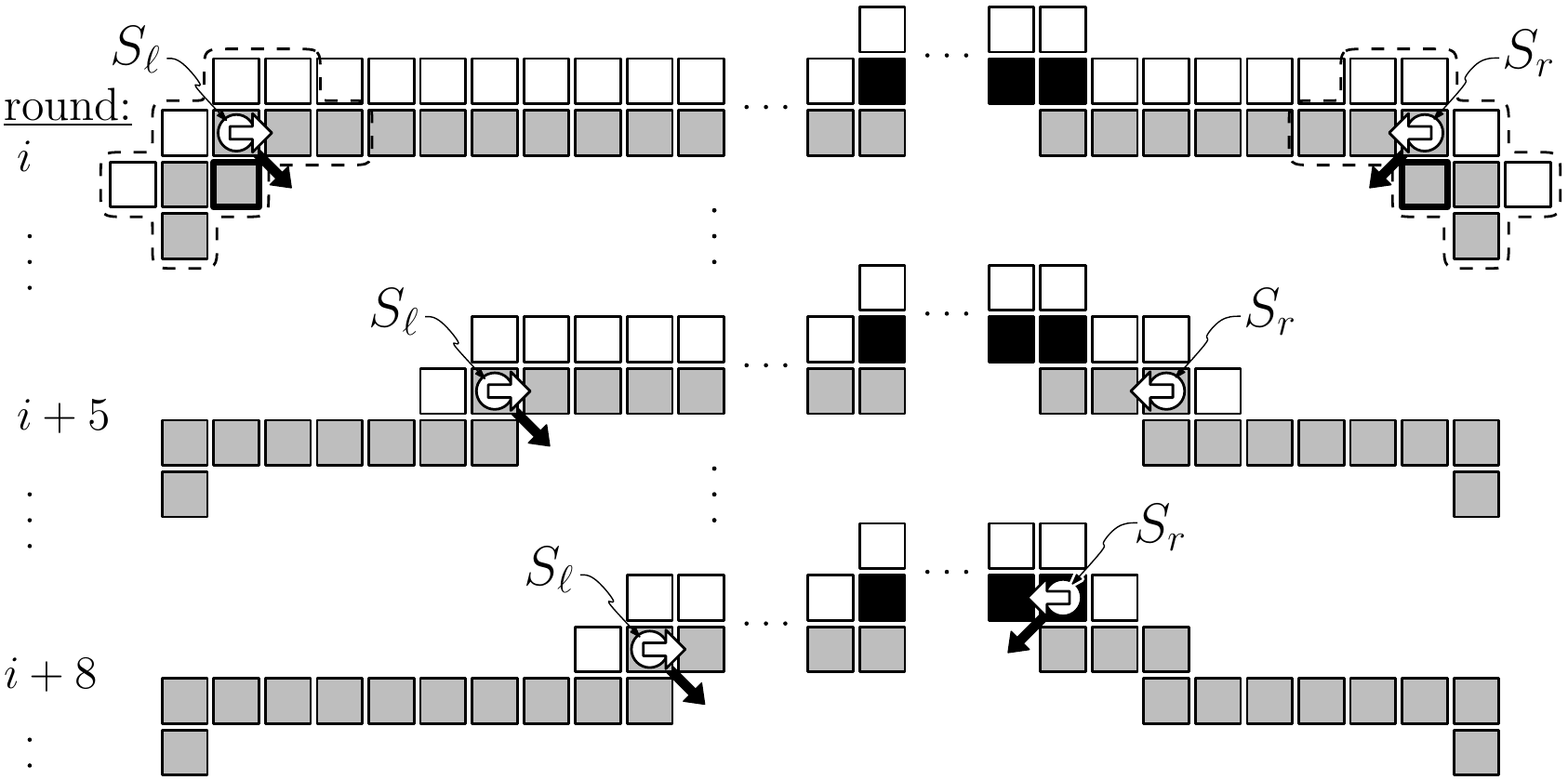}
            \caption{A good pair, started on a quasi line in order to shorten the black subboundary. Several and different run operations are needed.}
            \label{fig:goodpair_ex2_quasiline}
        \end{figure}
        Figure~\ref{fig:goodpair_ex2_quasiline} shows an example for a more general case of a quasi line:
        Again, the black subboundary will be reshaped for performing a merge.
        But now, several run operations (cf.\ Figure~\ref{fig:ALG_hop_simple2}) are needed until a runner arrives at and endpoint of the black subboundary:
        Until round $i+4$, the runners execute only the operation OP-A (cf.\ Figure~\ref{fig:ALG_hop_simple2}).
        Afterwards, the runner $R(S_r)$ starts operation OP-B, while $R(S_\ell)$ still executes OP-A.
        After operation OP-B has been processed completely, $S_r$ finally is located at the right end of the black subboundary (round $i+8$).
        Now, its runners start shortening this subboundary by executing OP-A until the merge can be performed.
        Afterwards, $S_\ell$ is still active and keeps moving and will stop at the latest when an endpoint of the quasi line
        becomes visible.
        
        Many of the started runs do not belong to good pairs.
        In Section~\ref{sec:correctness}, we will prove the actual existence of good pairs.
    \subsection{Pipelining}
        We will show that even if multiple good pairs are nested into each other, different good pairs will enable different merges.
        This is what we call \emph{pipelining}.
        \begin{figure}[h]
        \centering
            \includegraphics[scale=\figscale]{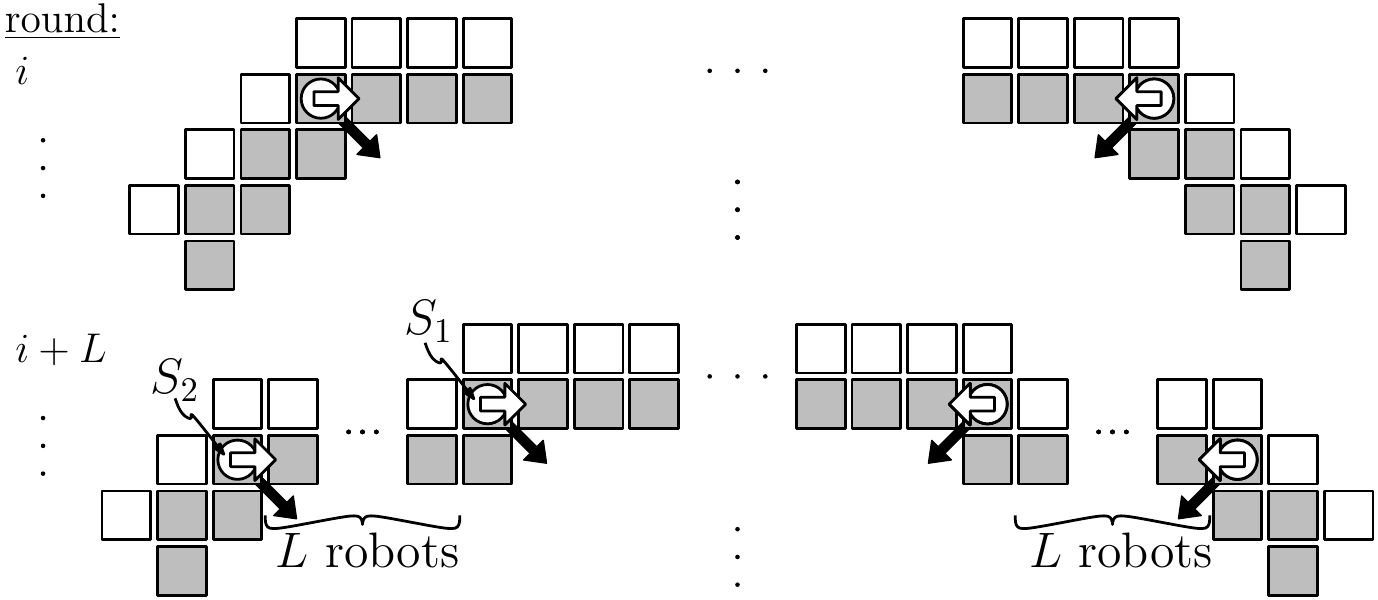}
            \caption{Pipelining of runs. New runs are started every $L=\pipelininginterval$ rounds.}
            \label{fig:pipelining_ex}
        \end{figure}
        Figure~\ref{fig:pipelining_ex} shows an example for the case that the quasi line has the shape of a straight line.
        The more general case works analog to this.

        As runs are moving with constant speed, the runs of the inner good pair will meet and as the result enable a merge and stop, first.
        Then, obviously the outer good pair will also enable a (different) merge, some rounds later.
\section{Correctness and running time}\label{sec:correctness}
        Our basic argumentation for the total running time is the following:
        If every round a merge can be performed, then the time needed for the gathering is obviously upper bounded by $n$,
        with $n$ being the number of robots.
        If no merge can be performed, then every $L=\pipelininginterval$ rounds a \emph{good pair} can be started which reshapes the
        swarm such that a merge can be performed after at most $n$ further rounds.
        As $L$ is a constant, this then leads to a linear total running time $\calO(n)$.
        
        In this section, we give the formal proofs.
        Our goal is the proof of the correctness and the linear running time (Theorem~\ref{thm:runningtime}).
        In our main approach, we want to enable merges by good pairs if else no merge could be performed.
        The following two lemmas prove that this actually works.
        They are the base for the proof of the theorem.
        The proofs of the lemmas can be found in Subsection~\ref{ssec:progresspair_exists} and \ref{ssec:progresspair_merge}.
 
        \paragraph{Progress pairs}
        Because of the local vision, new runs and maybe also good pairs are started every $L=\pipelininginterval$ rounds, regardless of whether or not at somewhere in the swarm merges can be performed.
        In the following analysis, we will only argue with new good pairs which are started if during the last $L-1$ and the current round in the whole swarm no merge has been performed.
        We distinguish such good pairs from others by calling them \emph{progress pairs}.
        \begin{lemma}\label{lem:progresspair_exists}
            Every $L=\pipelininginterval$ rounds either a merge has been performed 
            or else a new \emph{progress pair} is started.
        \end{lemma}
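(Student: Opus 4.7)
The plan is to argue contrapositively: if no merge occurs in any of the $L = \pipelininginterval$ rounds under consideration, I will exhibit a good pair that is started in the current (divisible-by-$L$) round. The key observation is that the algorithm always performs a merge whenever a merge configuration from Figure~\ref{fig:ALG_merge2} exists somewhere in the swarm, so ``no merge during $L$ rounds'' implies that the swarm is a \emph{Mergeless Swarm} in each of these rounds. In particular, at the round in which the periodic check of step~3 of the algorithm (Figure~\ref{fig:algo}) is triggered, the swarm is mergeless. It therefore suffices to show that every mergeless swarm contains at least one quasi line whose two endpoints form a good pair.

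To locate such a quasi line, I would look at the global geometric extremes of the outer boundary. Let $y_{\max}$ denote the maximum $y$-coordinate occupied by any robot, and consider a maximal horizontal run $T$ of robots on the line $y = y_{\max}$. Since $y_{\max}$ is globally maximal, no robot lies above $T$, and $T$ together with the adjacent boundary pieces below it belongs to the outer boundary. The main task is to show that this topmost piece extends into a horizontal quasi line in the sense of Definition~\ref{def:quasiline}: I would verify each of the three conditions by contradiction, showing that a violation would produce a local configuration from Figure~\ref{fig:ALG_merge2}. For instance, a horizontal subboundary of length $\le 2$ at or near the top, or a vertical subboundary of length $\ge 3$ flanking $T$, would locally match one of the merge patterns (since the cells marked as empty in Figure~\ref{fig:ALG_merge2} are forced to be empty by $y \le y_{\max}$ and by the maximality of $T$), contradicting mergelessness.

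Once the quasi line is established, the ``mergeless'' clause of Definition~\ref{def:quasiline} immediately guarantees that both of its endpoints carry Start-A or Start-B shapes from Figure~\ref{fig:runstartingrobots}, so the periodic check in step~3 of Figure~\ref{fig:algo} starts runs at both endpoints in the current round. To certify that this pair is a \emph{good} pair in the sense of Subsection~\ref{ssec:goodpairs}, I would note that (i) the exterior neighbors of the two run-starting robots necessarily lie below the quasi line because the quasi line is horizontal with no robots above and $T$ is maximal, and (ii) the entire global area above the quasi line is empty, again by the choice of $y_{\max}$. Both clauses of the good-pair definition (Figure~\ref{fig:goodpair_ex1}) are thus satisfied. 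Since the pair is started in a round in which no merge has occurred during the preceding $L-1$ rounds nor the current round, it qualifies as a progress pair by definition.

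The main obstacle is the geometric case analysis in the second step: verifying that violating any of the three quasi-line conditions at the top of the swarm forces a merge configuration from Figure~\ref{fig:ALG_merge2}. This requires enumerating the local shapes that $T$ and its downward-bending endpoints can take, and matching each pathological shape (short horizontal run, overlong vertical run, non-aligned leading or trailing robots) against one of the merge templates, using the emptiness of everything above $y_{\max}$ to supply the white-cell conditions of the templates. Everything else --- the reduction to the mergeless case and the good-pair verification from the quasi line --- is essentially direct from the definitions.
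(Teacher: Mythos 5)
Your overall plan---reduce to a Mergeless Swarm, find a topmost horizontal quasi line, and start a good pair at its two ends---is the same strategy as the paper's, but your selection device has a genuine gap at the decisive step. You pick the maximal horizontal run $T$ at the globally maximal $y$-coordinate and then claim that the \emph{entire} quasi line containing $T$ has no robots above it ``by the choice of $y_{\max}$.'' That choice only guarantees emptiness above $T$ itself. A horizontal quasi line may alternate horizontal subchains of length $\geq 3$ with vertical subchains of length $\leq 2$ (Definition~\ref{def:quasiline}) and therefore descend arbitrarily far below $y_{\max}$ before reaching the endpoint where the run is started; over those lower portions, other parts of the connected swarm can overhang (at heights strictly between the quasi line and $y_{\max}$) without violating mergelessness. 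In that case the pair of runs started at the quasi line's endpoints fails the good-pair condition of Subsection~\ref{ssec:goodpairs}, which requires the whole global area above the \emph{connecting} quasi line to be robot-free. The same gap affects your claim~(i): the exterior neighbors of the run pair sit at the quasi line's endpoints, possibly far from $T$, so ``$T$ is maximal at $y_{\max}$'' does not by itself force them to lie below the line.

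The paper closes exactly this hole with its vector-chain construction: it first shows that in a Mergeless Swarm the entire outer boundary decomposes into quasi lines and stairways (including the case of interior robots blocking merges, Figure~\ref{fig:inside_merges}, which your sketch does not address), then traverses the outer boundary between the leftmost and rightmost robots $s,t$ of the upper envelope, partitions that traversal into maximal $x$-monotone subchains, and selects a subchain $\mathcal{C}$ that lies \emph{entirely} on the upper envelope. By construction nothing lies above $\mathcal{C}$, and mergelessness forces $\mathcal{C}$ to begin and end with vertical runs of three robots pointing downward, which supplies both the Start-A/Start-B configurations and the ``exterior neighbors located downwards'' clause. To repair your proof you would need an analogous argument that some horizontal quasi line, in its entirety together with its two downward-turning ends, lies on the upper envelope; the topmost row alone does not certify this.
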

        \begin{lemma}\label{lem:progresspair_merge}
            For all \emph{progress pairs} the following properties hold.
            \begin{enumerate}[a)]
                \item Every progress pair enables a merge 
                (after at most $n$ rounds).\label{enum:progpairmerge}
                \item Different progress pairs enable different merges.\label{enum:progpairuniqueness}
            \end{enumerate}
        \end{lemma}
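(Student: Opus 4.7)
The plan is to handle part \ref{enum:progpairmerge}) by exploiting the defining geometric property of a good pair, namely that the region above its quasi line is globally empty, and to handle part \ref{enum:progpairuniqueness}) by attributing each enabled merge to the unique pair of runners that caused it. For part \ref{enum:progpairmerge}), I would first establish an invariant: throughout the lifetime of the progress pair, no robot is ever located in the (global) empty area above its quasi line. This amounts to checking that none of the algorithm's operations -- the merges of Figure~\ref{fig:ALG_merge2}, the run operations of Figure~\ref{fig:ALG_hop_simple2}, and the run-passing of Figure~\ref{fig:runmeetingcases} -- can push a robot across the quasi line into that region. The observation already made in Subsection~\ref{ssec:goodpairs}, that each operation only displaces robots along or away from the empty area, turns this into a finite case check.

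Using this invariant, the next step is to exclude premature termination of the two runners $S_1,S_2$ of the progress pair. I would walk through the conditions of Table~\ref{table:runterminate}: conditions \ref{enum:stopmerge}), \ref{enum:stoppass}) and \ref{enum:stoplongop}) presuppose a nearby merge, which by the emptiness invariant and the definition of a progress pair (no merge in the previous $L-1$ rounds) cannot occur before $S_1,S_2$ themselves trigger one; condition \ref{enum:stoptooclose}) cannot fire between $S_1$ and $S_2$ because they are not sequent (they move toward each other), and any third run sitting between them on the quasi line would either disrupt the good-pair geometry (contradiction) or stop by condition \ref{enum:stopendpoint}) itself; condition \ref{enum:stopendpoint}) only fires near the opposite endpoint, which is precisely the meeting moment we are aiming for. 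Since runs advance one robot per round along a quasi line of length at most $n$, the meeting occurs within at most $n$ rounds.

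It remains to show that this meeting is the merge case of Figure~\ref{fig:runmeetingcases}.$a$) and not the passing case $b$). Here the symmetric orientation of the two runners at the two endpoints of the quasi line, together with the emptiness of the region above, forces the local configuration of the last few robots to match one of the shapes of Figure~\ref{fig:ALG_merge2} as soon as the runners' distance drops to a constant depending on the viewing radius \viewingradius. A direct geometric check of the finitely many possible last-round configurations on a quasi line completes this step and delivers part \ref{enum:progpairmerge}).

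For part \ref{enum:progpairuniqueness}), I would attribute each such merge to the unique ordered pair of runners whose meeting triggered it. Each run state is generated at most once (at one specific robot and round), so different progress pairs consist of disjoint pairs of runs and therefore correspond to distinct meeting events, which take place in distinct rounds or at distinct grid positions; the nested scenario of Figure~\ref{fig:pipelining_ex} is the archetypal case, where the inner pair merges strictly earlier and strictly inside the outer pair. The main obstacle I anticipate is the invariance statement for the empty region above the quasi line: one has to argue carefully that operations on unrelated nearby boundaries, including ongoing run-passing operations of other pairs, cannot inject a robot into that region. This is exactly where the concrete value of the viewing radius \viewingradius and of the pipelining interval $L=\pipelininginterval$ are needed, so that the start, passing, and termination rules leave enough slack for the good-pair geometry to survive undisturbed.
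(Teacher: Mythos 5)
Your overall route resembles the paper's: both arguments rest on the preservation of the good-pair geometry (your ``empty region above the quasi line stays empty'' is essentially the paper's invariant that good pairs stay good pairs, proved there as Lemma~\ref{lem:runinvariants}) and on a case walk through the termination conditions of Table~\ref{table:runterminate}. However, there is a genuine gap in how you handle premature termination, and it propagates into part~\ref{enum:progpairuniqueness}).

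You claim that conditions \ref{enum:stopmerge}), \ref{enum:stoppass}) and \ref{enum:stoplongop}) ``cannot occur before $S_1,S_2$ themselves trigger one'' because no merge can happen nearby. That is false in general: progress pairs are nested (cf.\ Figure~\ref{fig:pipelining_ex}), and an inner pair, or a non-progress run, can enable a merge on the very quasi line between $S_1$ and $S_2$ while they are still travelling; if such a merge removes the target corner of an ongoing OP-B/OP-C or run-passing operation, the outer runner \emph{does} stop early. The definition of a progress pair only forbids merges in the $L-1$ rounds before its start, not during its lifetime. The paper therefore does not try to exclude these terminations; it permits them and instead shows that whenever a run of a progress pair is stopped by a merge, that merge can be \emph{credited} to this progress pair, and it uses the nesting structure to argue that no merge needs to be credited to two different progress pairs (the other pair involved either is not a progress pair or has already been credited with an earlier merge). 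This crediting argument is precisely what proves part~\ref{enum:progpairuniqueness}); your version of~\ref{enum:progpairuniqueness}) (``disjoint pairs of runs give distinct meeting events'') presupposes that every progress pair survives to its own meeting, which is the unproved claim above. To repair the proof you would need to replace ``no premature termination'' by ``every termination before the meeting is caused by a merge that can be injectively assigned to this progress pair,'' which is the paper's actual argument. Your treatment of condition~\ref{enum:stoptooclose}) is also sketchier than the paper's (which explicitly rules out a too-close sequent run originating from a different quasi line via the configuration of Figure~\ref{fig:y-runs_contra}), but that part is at least in the right spirit.
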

        Using these two lemmas, we can now prove the total linear running time.
        \begin{theorem}\label{thm:runningtime}
            Given a swarm of $n$ robots. Then, after $\calO(n)$ rounds gathering is done.
            This is asymptotically optimal.
        \end{theorem}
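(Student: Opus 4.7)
The plan is to combine Lemmas~\ref{lem:progresspair_exists} and~\ref{lem:progresspair_merge} in a short double-counting argument for the $\calO(n)$ upper bound, and then to observe a matching $\Omega(n)$ lower bound coming from the constant per-round movement.

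Let $T^*$ denote the round at which gathering completes. First, I would partition the prefix $[1, T^*-n]$ of the execution into $\lfloor (T^*-n)/L \rfloor$ consecutive blocks of length $L=\pipelininginterval$ and apply Lemma~\ref{lem:progresspair_exists} to each: every such block contains at least one merge or at least one new progress pair. Writing $M$ and $P$ for the number of merges and of progress pair starts occurring inside $[1,T^*-n]$, this gives $M+P \ge \lfloor (T^*-n)/L \rfloor$. Since every merge removes a robot and the gathered configuration still contains at least one robot, $M \le n-1$. For $P$, Lemma~\ref{lem:progresspair_merge}\ref{enum:progpairmerge} guarantees that every progress pair started by round $T^*-n$ has produced its merge by round $T^*$, and part~\ref{enum:progpairuniqueness} ensures those merges are pairwise distinct; hence $P$ is bounded by the total number of merges in the execution, which is again at most $n-1$. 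Combining, $\lfloor (T^*-n)/L \rfloor \le 2(n-1)$, and therefore $T^* \le 2L(n-1) + n + L = \calO(n)$.

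For asymptotic optimality, I would exhibit a pathological initial configuration: a straight horizontal chain of $n$ robots on consecutive grid cells is the natural candidate. Its $L_1$-diameter is $n-1$, while every robot can change its position by at most one grid cell per round, so no algorithm can place all robots inside a common $2\times 2$ area in fewer than $\lceil (n-2)/2 \rceil = \Omega(n)$ rounds, matching the upper bound.

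The step I expect to require the most care is the precise choice of the truncated prefix $[1, T^*-n]$: shrinking the interval by exactly $n$ rounds is what allows the $n$-round delay from Lemma~\ref{lem:progresspair_merge}\ref{enum:progpairmerge} to be absorbed, so that every counted progress pair can be charged to a distinct merge already occurring inside $[1, T^*]$ rather than to one that might happen after gathering has finished. Once this is set up, both the block-counting and the diameter-based lower bound are routine.
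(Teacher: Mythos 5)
Your proposal is correct and follows essentially the same route as the paper's own proof: partition time into length-$L$ blocks, charge each block to either a merge or a progress pair via Lemma~\ref{lem:progresspair_exists}, bound both counts by $n$ using Lemma~\ref{lem:progresspair_merge}, and invoke the $\Omega(n)$ diameter lower bound. Your version is slightly more careful in truncating the prefix by $n$ rounds to absorb the merge delay and in exhibiting an explicit worst-case configuration, but the underlying argument is identical.
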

        \begin{proof}
            We subdivide time into intervals of lengths $L$, where $L$ denotes a number of rounds.
            Merges can be performed during at most $n$ such intervals, because every merge removes at least one robot.
            In all other intervals a new progress pair starts (Lemma~\ref{lem:progresspair_exists}).
            Each of these progress pairs leads to a merge (Lemma~\ref{lem:progresspair_merge}.$\ref{enum:progpairmerge})$).
            Because no two of them lead to the same merge (Lemma~\ref{lem:progresspair_merge}.$\ref{enum:progpairuniqueness})$),
            the number of intervals without merges is also upper bounded by $n$.

            By Lemma~\ref{lem:progresspair_merge}, a progress pair needs at most $n$ rounds until it has led to a merge.
            We assume the worst-case, which is that in the last of the $2\cdot n$ intervals the last progress pair was started.
            Then the total running time is upper bounded by $2n\cdot L\; +n$, which proves the upper bound of the theorem, because $L$ is a constant.

            In our model, the diameter of the initial configuration provides the worst-case lower bound $\Omega(n)$ for any gathering strategy.
        \end{proof} 
    \subsection{Proof of Lemma~\ref{lem:progresspair_exists}}\label{ssec:progresspair_exists}
        \begin{proof}
            The lemma requires that the swarm is a \emph{Mergeless Swarm}.
            As the basis of our proof, we start by showing that then the outer boundary of the swarm only consists of quasi lines and stairways (cf.\ Figure~\ref{fig:stairway_ohne-runs}).
            First, we assume that every robot is connected to only two other robots.
            Then, especially no robots are located at the inside of the swarm, and the swarm consists only of its \emph{outer boundary}.
            We start with the assumption that merges are only possible up to the length $2$ (cf.\ Figure~\ref{fig:ALG_merge2}).
            Then, as by definition all horizontal subchains of a horizontal quasi line consist of at least $3$ robots, no merge can be performed on them.
            In order to connect two horizontal or two vertical quasi lines without enabling a merge, they must be connected by so called \emph{stairways}.
            Stairways can have arbitrary length and are subchains of alternating left and right turns.
            Figure~\ref{fig:stairway_ohne-runs} shows an example.
            The bicolored robots are the connection points between the quasi lines and the stairway.
            \begin{figure}[h]
                \centering
                \includegraphics[scale=\figscale]{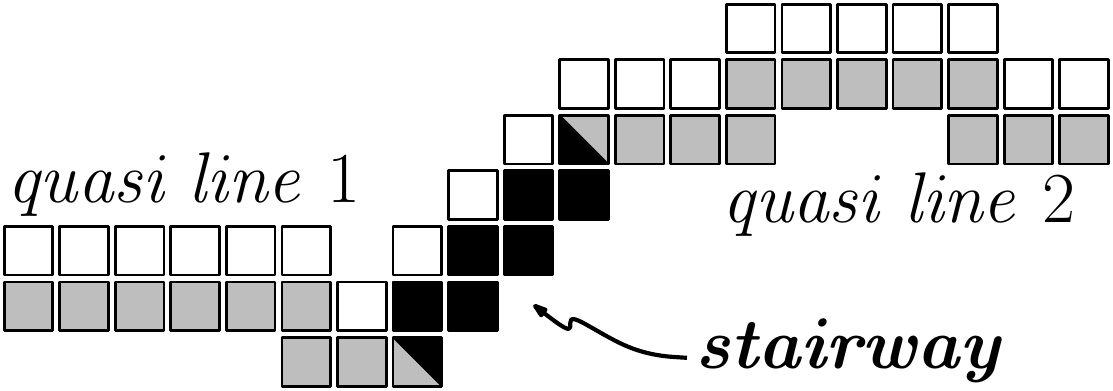}
                \caption{Two quasi lines, connected by a \emph{stairway}.}
                \label{fig:stairway_ohne-runs}
            \end{figure}
            All differently shaped connecting subchains would allow merges.
            So, assuming that every robot is connected to only to other robots, the whole outer boundary consists only of quasi lines and
            stairways.
            This clearly is also the case, if longer merges are allowed as, for horizontal quasi lines,
            increase the number of horizontally aligned robots but decrease the number of vertically aligned ones (cf.\ Definition~\ref{def:quasiline} and Figure~\ref{fig:ALG_merge2}).

            Now, we also allow connectivities bigger than $2$.
            Then robots may also be located at the swarm's inside or some parts of the outer boundary may be neighboring (i.e., if at that position the diameter~$=2$).
            Such robots can hinder merges.
            \begin{figure}[h]
                \centering
                \includegraphics[scale=\figscale]{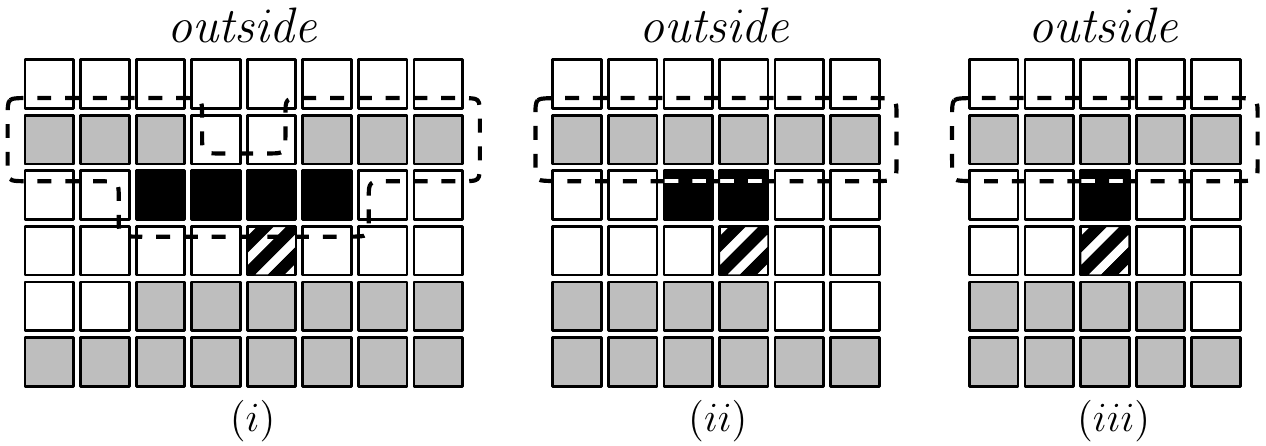}
                \caption{The hatched robot prevents the merge.}
                \label{fig:inside_merges}
            \end{figure}
            Figure~\ref{fig:inside_merges} shows examples for configurations with existing inside robots.
            In the figure, the dashed bordered robots are part of the outer boundary.
            The hatched robots prevent the black ones from performing a merge for which the black ones needed to hop upwards.
            So, although the swarm is mergeless, the outer boundary may still contain mergable shapes.
            For merges of length $>2$ ($(i)$ in the Figure), the outer boundary then still is a quasi line.
            For merges of length $1$ and $2$ ($(i,ii)$ in the Figure), the black robots are not part of the outer boundary.
            So, in every \emph{Mergeless Swarm}, the outer boundary only consists of quasi lines and stairways.

            Now, we can show that in a \emph{Mergeless Swarm} always a progress pair exists.
            For this, we assume a global north and west and construct a vector chain along the swarm's outer boundary as follows.
            We consider the upper envelope of the swarm and take its left- and rightmost robots $s$ respectively $t$.
            Because $s,t$ must be located on the boundary of the swarm's smallest enclosing rectangle and the swarm is mergeless,
            below each of them must exist two additional robots (cf.\ Figure~\ref{fig:upper-boundary_def_by_ex_1}).
            We start from the left at the vector $\vec{v_0}$ constructing a vector chain in clockwise orientation along the outer boundary
            of the swarm, ending at the vector $\vec{v_m}$.
            The figure shows a significant example of this construction.
            Here, the grey colored robots denote the corresponding part of the swarm's outer boundary.
            \begin{figure}[h]
                \centering
                \includegraphics[scale=\figscale]{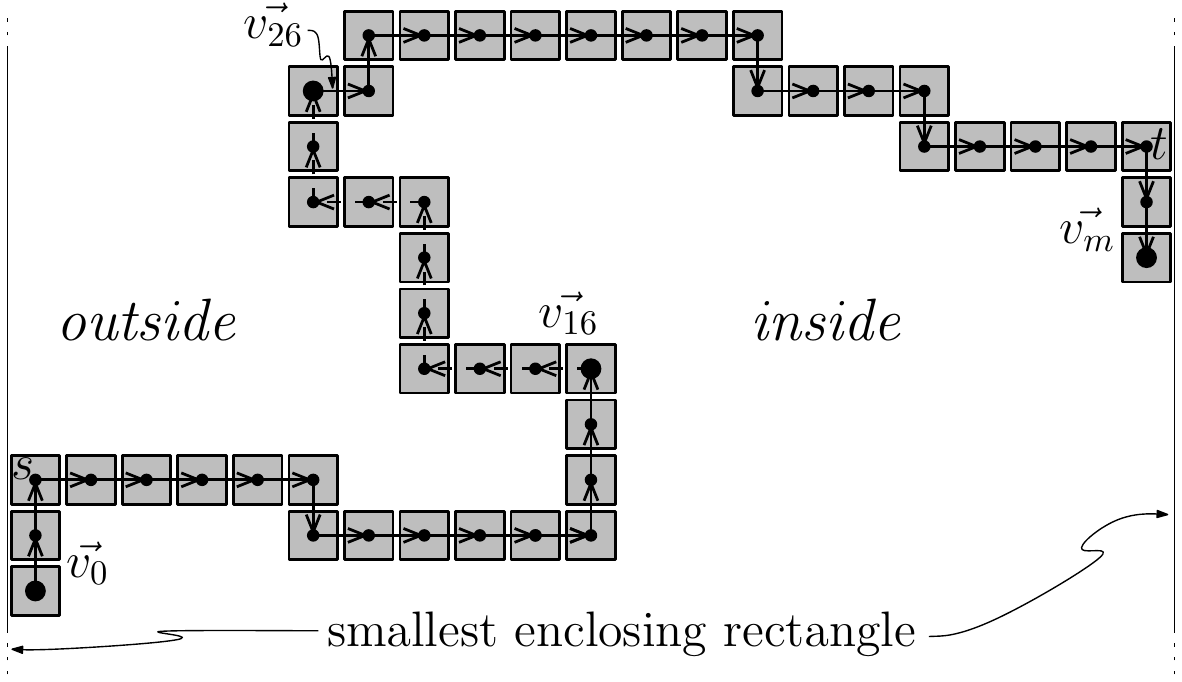}
                \caption{Constructing a vector chain along the swarm's outer boundary.}
                \label{fig:upper-boundary_def_by_ex_1}
            \end{figure}
            Starting from its left end, we divide this vector chain into longest $x$-monotone subchains.
            The vector, originated at $s$, by construction points to the east.
            The second subchain starts when the first vector points to the west.
            The third subchain starts when the first vector points to the east again,
            and so forth.
            In the figure, these are the vectors $\vec{v_{16}}$ respectively $\vec{v_{26}}$.
            The constructed vector chain may overlap itself at places where the diameter of the swarm's boundary amounts only $1$, but
            cannot contain any crossings.
            Because all constructed subchains are $x$-monotone, at least one of them is fully part of the upper envelope.
            Let $\mathcal{C}$ denote this subchain.
            If we consider $\mathcal{C}$, then, as the swarm is mergeless, both predecessors of its first vector must point to the north, i.e., the corresponding robots are located downwards.
            By the same arguments, both successors of its last vector must point to the south, i.e., the corresponding robots are also located downwards.
            This also holds if $s$ or $t$ the endpoints of this vector chain.
            Then, on both sides are three neighboring robots on a vertical straight line. 
            By Definition~\ref{def:quasiline} of a quasi line, these robots must be part of vertical quasi lines.

            In contrast, if looking at the subchain $\mathcal{C}$, then because the swarm is mergeless, $\mathcal{C}$ must contain at least two succeeding vectors, both pointing to the east.
            The three connected robots then must be part of a horizontal quasi line.
            At the transition between horizontal and vertical quasi lines, possibly connected by stairways, the shape matches the
            starting subboundaries Start-A or Start-B of Figure~\ref{fig:runstartingrobots}.
            As both vertical quasi lines are located downwards w.r.t.\ the horizontal quasi line of $\mathcal{C}$, a good pair can be started (cf.\ Subsection~\ref{ssec:goodpairs}).
            Because we have assumed that the swarm has been a Mergeless Swarm also during the previous $L-1$ rounds, this good pair is a progress pair.
        \end{proof}
        \subsection{Proof of Lemma~\ref{lem:progresspair_merge}}\label{ssec:progresspair_merge}

            For the proof, we need the run invariants of Lemma~\ref{lem:runinvariants}.
            \begin{lemma}\label{lem:runinvariants}
                The value $L=\pipelininginterval$ and the value $\viewingradius$ for the viewing radius ensure, that for every run $S$ until it terminates, the following invariants holds.
                \begin{enumerate}
                        \item Every round, $S$ moves one robot further in moving direction.\label{enum:runinvmovement}
                        \item After the first three rounds after its start the reshapements of the runner $R(S)$ do not violate the
                              quasi line Definition~\ref{def:quasiline} of its own quasi line.\label{enum:runinvquasiline}
                        \item The reshapements of $R(S)$ do not violate the quasi line shape of other quasi lines if they belong to good pairs.\label{enum:runinvforeignquasilines}
                        \item $S$ cannot see other sequent runs in front of it.\label{enum:runinvvisseqrun}
                        \item $S$ is either in progress of passing along another run or the runner $R(S)$ executes one of the
                              operations OP-A, OP-B or OP-C (Figure~\ref{fig:ALG_hop_simple2}).\label{enum:runinvops}
                        \item Good pairs stay being good pairs.\label{enum:runinvgoodpairs}
                \end{enumerate}
            \end{lemma}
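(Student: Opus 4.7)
The plan is to prove the six invariants by simultaneous induction on the round number $t$ measured from the start of the run $S$. The base case covers the first three rounds of $S$: here invariant~\ref{enum:runinvquasiline} is explicitly excluded, and the remaining invariants are checked by direct inspection of the two starting subboundaries Start-A and Start-B of Figure~\ref{fig:runstartingrobots}. One verifies that the runner locally recognizes OP-A or OP-C, that each such operation advances the run state by exactly one boundary robot, and that after round $3$ the local neighborhood of $R(S)$ again satisfies Definition~\ref{def:quasiline}. Because runs are only generated at endpoints of quasi lines inside a Mergeless Swarm, no choice conflicts with the symmetry-breaking cases of Figure~\ref{fig:symbreakrunstarts} arise at $t=0$.

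For the inductive step, the quantitative core is invariant~\ref{enum:runinvvisseqrun}. Two sequent runs on the same boundary must originate from the same run-starting subboundary, hence their creation times differ by at least $L=\pipelininginterval$ rounds. By the induction hypothesis applied to both runs, each has advanced by one boundary robot per round (invariant~\ref{enum:runinvmovement}), so right after the younger one is born the pair lies at boundary-distance at least $L=22$. A merge elsewhere on the same boundary shortens this distance by a bounded amount per round, but since $L-1=21>\viewingradius=20$, the gap cannot drop to $\leq\viewingradius$ without first passing through visibility, at which point termination condition~\ref{enum:stoptooclose} of Table~\ref{table:runterminate} fires and the rearmost run terminates; so while both runs are alive, invariant~\ref{enum:runinvvisseqrun} holds.

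With invariant~\ref{enum:runinvvisseqrun} in hand, invariants~\ref{enum:runinvmovement} and~\ref{enum:runinvops} follow from a local case analysis of the subboundary shape within the runner's viewing range. By invariants~\ref{enum:runinvquasiline} for $S$'s own line and~\ref{enum:runinvforeignquasilines} for any passing partner, the local view of $R(S)$ is determined either by a stretch of quasi line, which matches one of OP-A, OP-B, OP-C, or by an oncoming run from a good pair, which matches the run passing operation of Figure~\ref{fig:runmeetingcases}; each operation advances the run state by exactly one boundary robot. Invariant~\ref{enum:runinvquasiline} is maintained because the diagonal hop of OP-A merely slides a corner sideways on the quasi line, OP-B traverses a corner without destroying the three-robot alignments flanking it, and OP-C was designed precisely to repair the Start-B configuration. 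Invariants~\ref{enum:runinvforeignquasilines} and~\ref{enum:runinvgoodpairs} then follow from the geometry established in Subsection~\ref{ssec:goodpairs}: the region above the quasi line of a good pair is globally empty, and none of OP-A, OP-B, OP-C, the run passing operation, nor the merge hops of Figure~\ref{fig:ALG_merge2} ever cause a robot from outside the line to enter this protected area.

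The hard part is the bookkeeping for the run passing operation. During passing, two runners share a common neighborhood and the operation spans several rounds; one must confirm that throughout these rounds each runner continues to see the opposing runner and its target corner, that neither is tempted to start a fresh OP-A in the middle, and that a merge triggered by a third run in the same viewing range cannot irreparably break the target configuration (otherwise termination condition~\ref{enum:stoppass} applies). It is here that the viewing radius $\viewingradius=20$, rather than the tighter $\viewingradiusident=11$ sufficient when no passing occurs, becomes essential, because the full passing pattern plus the constant look-ahead needed to certify the target corner must fit simultaneously inside the runner's local view. A careful enumeration of all passing configurations, combined with the distance bound from invariant~\ref{enum:runinvvisseqrun}, closes the induction.
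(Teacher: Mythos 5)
Your overall structure (simultaneous induction over rounds, case analysis of OP-A/OP-B/OP-C, isolating the run passing operation as the delicate step) matches the paper's proof, but there is a genuine gap at exactly the point you yourself flag as ``the hard part'': you never derive why the specific constants $L=\pipelininginterval$ and viewing radius $\viewingradius$ suffice, and that derivation is the actual content of the lemma. The paper's proof of invariant~\ref{enum:runinvops} does the decisive arithmetic: a run passing operation lasts at most $6$ rounds (Figure~\ref{fig:crossconst}), during which the distance between the passing run $S_2$ and the next sequent run $S_1^{\mathrm{succ}}$ behind $S_1$ shrinks by $9$; requiring that this distance is still at least the run passing distance $\crossdistance$ afterwards forces $D\geq 12$, hence $L\geq 13$ and a viewing radius of $\viewingradiusident$ in the single-quasi-line case, and the unoptimized values $L=\pipelininginterval$ and radius $\viewingradius$ once all passing configurations of Section~\ref{sec:runpassingcomplete} are included. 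Your proposal instead places the quantitative burden on invariant~\ref{enum:runinvvisseqrun} via the inequality $L-1=21>\viewingradius$, which is not the binding constraint (it involves neither the passing duration nor $\crossdistance$), and then asserts that ``a careful enumeration of all passing configurations closes the induction'' without performing it. Since the lemma is precisely the claim that these numerical values work, this omission is not a stylistic shortcut but the missing core of the argument.

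Two smaller problems. First, your justification of invariant~\ref{enum:runinvforeignquasilines} (nothing enters the empty region above a good pair's quasi line) is really the argument for invariant~\ref{enum:runinvgoodpairs}; the content of invariant~\ref{enum:runinvforeignquasilines} concerns what happens when $R(S)$ physically hops onto or next to a \emph{different} quasi line $q^\star$. The paper handles this by noting that termination condition~\ref{enum:stopendpoint} restricts the case to a $q^\star$ of the same orientation, that only OP-A is relevant there, and that the hop either lands on an empty cell (so the argument for invariant~\ref{enum:runinvquasiline} applies) or on an occupied cell (so the shape of $q^\star$ is unchanged). Second, your claim that two sequent runs on the same boundary must originate from the same run-starting subboundary is false in general --- the paper explicitly treats sequent runs originating from different quasi lines (Figure~\ref{fig:y-runs_contra}) --- so the ``creation times differ by at least $L$'' bound needs the additional argument that foreign runs cannot close the gap, which the paper supplies separately in the proof of Lemma~\ref{lem:progresspair_merge}.
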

            Now, we can prove Lemma~\ref{lem:progresspair_merge}:
            \begin{proof}
                $\ref{enum:progpairmerge}):$
                At the time, a progress pair $S,S'$ is started, the subboundary connecting both is a quasi line.
                Because of Lemma~\ref{lem:runinvariants}.$\ref{enum:runinvquasiline})$ and $\ref{enum:runinvforeignquasilines})$,
                this also does not change
                if other runs are located on this quasi line or next to it.
                And also merges preserve the quasi line properties.
                So, if not stopped, $S$ and $S'$ keep moving towards each other (Lemma~\ref{lem:runinvariants}.$\ref{enum:runinvmovement})$).
                Because of Lemma~\ref{lem:runinvariants}.$\ref{enum:runinvgoodpairs})$ a merge can be performed at the latest when they meet,
                which takes at most $n$ rounds.

                It remains to show that $S,S'$ are not stopped by the algorithm's termination conditions (Table~\ref{table:runterminate})
                before the merge could be performed.
                In the following, we check all these conditions.
                In this proof, we denote the quasi line, on which $S,S'$ are located by the variable $q$.
                Let $S^\star$ be the next sequent run in front of $S$.
                $\ref{enum:stoptooclose})$: By definition of a progress pair, no merge has been performed since the last time new runs have been started.
                This means, that if no run from outside the quasi line $q$ has interfered with $q$, the distance between $S$ and $S^\star$
                is at least $L-1$ (cf.\ the proof of Lemma~\ref{lem:runinvariants}.$\ref{enum:runinvops})$)
                which is bigger than the viewing radius. (The same analogously holds for $S'$.)
                So then, the runs of a progress pair cannot be stopped by this termination condition.
                So, we assume that other runs have interfered with $q$.
                If $S^\star$ originates from a different quasi line, $S$ and $S^\star$ can only be oriented in the way shown
                in Figure~\ref{fig:y-runs_contra}, because the run operations require the cells "`above'" be empty.
                If these quasi lines shall meet, then at least one of them cannot be $y$-monotone.
                But then a merge would have removed at lease one of them before both runs have come too close.
                So $S^\star$ cannot originate from a different quasi line.
                \begin{figure}[h]
                    \centering
                    \includegraphics[scale=\figscale]{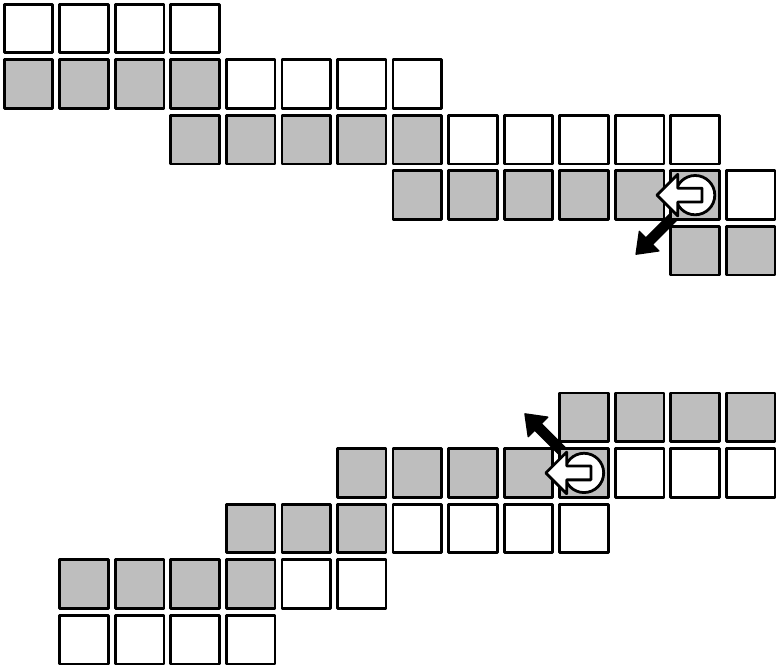}
                    \caption{Too close sequent runs cannot originate from different quasi lines.}
                    \label{fig:y-runs_contra}
                \end{figure}

                So, a run on a different quasi line (lying alongside $q$) must have shortened the part of $q$ that connects $S$ and $S^\star$.
                This can only have happened because of a merge operation, but this is not possible, because $S,S'$ is a progress pair.

                $\ref{enum:stopendpoint})$: If a run of a progress pair can see an endpoint of the quasi line in front of it, then the other run of the progress pair must have previously been stopped (Using the properties of Lemma~\ref{lem:runinvariants}.$\ref{enum:runinvquasiline})$ and $\ref{enum:runinvforeignquasilines})$.).
                Then, because a run of a progress pair cannot be stopped by condition $\ref{enum:stoptooclose})$, it instead must have been stopped because of a merge.
                So, a termination because of condition $\ref{enum:stopendpoint})$ is allowed.

                $\ref{enum:stoppass},\ref{enum:stoplongop})$:
                The change of the quasi lines shape could either have happened because of the reshapement of a sequent run or because of a merge operation.
                Because we have ensured the minimum distance of two sequent runs to be large enough, the first case cannot happened.
                So, a merge must have been the reason.
                Then, this merge must have been enabled by another run $S^\star$, moving towards $S$. 
                If $S^\star$ is the partner run of $S$, then stopping is allowed.
                Else, because progress pairs are nested into each other, $S^\star$ either was not a run of a progress pair or its partner run has previously been stopped.
                In the latter case, because runs of progress pairs cannot be stopped because of condition $\ref{enum:stoptooclose})$,
                the run must have been stopped by an earlier merge.
                In both cases, the current merge can be credited to the progress pair of $S$.

                $\ref{enum:stopmerge})$: By the same arguments, this merge does not need to be also credited to a different progress pair.
                This then also proves $\ref{enum:progpairuniqueness})$ of the lemma.
            \end{proof}
            The missing proof of Lemma~\ref{lem:runinvariants} can be found in Subsection~\ref{sec:runinvlemproof}.
        \subsection{Proof of Lemma~\ref{lem:runinvariants}}\label{sec:runinvlemproof}
            \begin{proof}
                Let $q$ be the quasi line $S$ is located on and we \oBdA\ assume that $q$ is a horizontal quasi line.
                $\ref{enum:runinvmovement})$: This directly follows by the run definition in Subsection~\ref{ssec:runs}.

                $\ref{enum:runinvquasiline})$: Cf.\ Figure~\ref{fig:ALG_hop_simple2}.
                OP-A ensures that on horizontal quasi lines only horizontal subchains of lengths $>2$ are shortened and vertical subchains remain unextended.
                OP-B does not reshape.
                OP-C can only executed during the first three rounds after $S$ has been started.
                Our design of the run passing operation also ensures that the quasi line properties are maintained.

                $\ref{enum:runinvforeignquasilines})$: 
                Because of the run stopping condition of Table~\ref{table:runterminate}.$\ref{enum:stopendpoint})$, $R(S)$ can only reshape
                a quasi line $q^\star$ if $q^\star$ is also a horizontal one.
                As the run operation OP-C cannot be applied inside $q^\star$, we only need to consider OP-A.
                If the cell onto that $R(S)$ hops, is empty, then OP-A (because of the cells that are required to be empty) can
                only be executed if $S$ is also located on $q^\star$.
                But then, by the same arguments as in $\ref{enum:runinvquasiline})$, the reshapements of $R(S)$ cannot violate the
                quasi line definition.
                If in contrast the cell onto the $R(S)$ hops is occupied by some of the robots of $q^\star$,
                then this operation does not modify the shape of $q^\star$.

                $\ref{enum:runinvvisseqrun})$: This is ensured by the run termination condition of Table~\ref{table:runterminate}.$\ref{enum:stoptooclose})$.

                $\ref{enum:runinvops})$: Cf.\ Figure~\ref{fig:ALG_hop_simple2}.
                All run operations OP-A, OP-B and OP-C ensure that if $S$ was located at some corner $c_1$ when the operation started,
                it afterwards either is located at some other corner $c_2$ such that both corners are rotated equally or terminates if the target
                corner has been removed during the operation (cf.~Table~\ref{table:runterminate}.$\ref{enum:stoplongop},\ref{enum:stoppass}))$.
                Then, because still located on a quasi line (cf.~$\ref{enum:runinvquasiline})$), again $a)$ or $b)$ can be applied or run passing is started.
                
                The run passing needs some closer look, because it interrupts other operations.
                In order to ensure a regulated behavior, we chose the distance between sequent runs big enough 
                such that a run does not have to execute a new run passing operation before it has finished its previous one.
                As the value of the constant $L$ controls the length of the waiting intervals between the start of two sequent runs
                and our stopping conditions maintain this minimum distance by stopping a run if the next sequent run in front of it becomes visible,
                we need to settle the values for these constants appropriately.
                At this place, we only analyze the case that was explained in Section~\ref{ssec:runs}, i.e., all participating runs are located
                on the same quasi line.
                The other cases (Section~\ref{sec:runpassingcomplete}) can be proven similarly.
                We look at two sequent runs $S_1$ and $S_1^{\mathrm{succ}}$ such that $S_1^{\mathrm{succ}}$ has been started after $S_1$.
                Their distance $D$ is at least $L-1$.
                This value is achieved if at the start of $S_1$ the operation OP-C (Figure~\ref{fig:ALG_hop_simple2}) was executed.
                
                Now, we chose the value for the constant $D$ big enough for ensuring that while some run $S_2$ is passing along $S_1$,
                $S_1^{\mathrm{succ}}$ becomes visible to $S_2$ the earliest when the passing operation with $S_1$ has been completed.
                Figure~\ref{fig:crossconst} shows an example for the longest possible duration of a run passing operation.
                \begin{figure}[h]
                    \centering
                    \includegraphics[scale=\figscale]{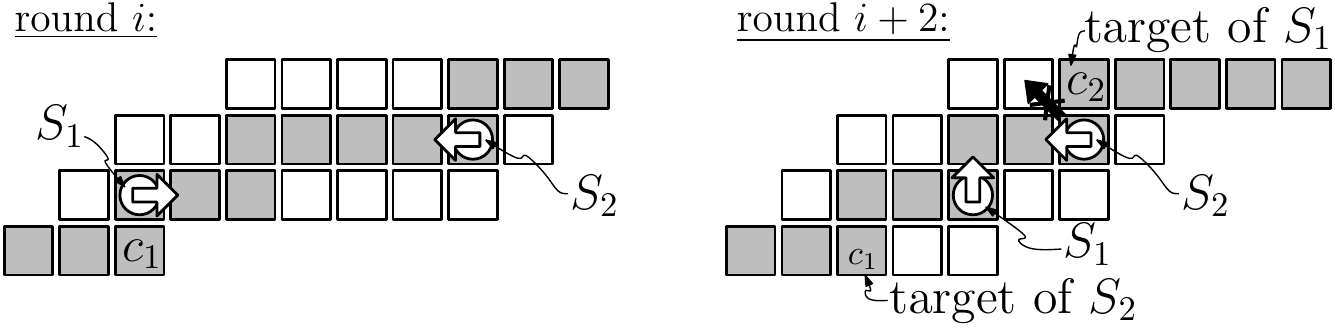}
                    \caption{Runs, passing along each other while the runner $R(S_1)$ is executing the operation OP-B (cf.\ Figure~\ref{fig:ALG_hop_simple2}).}
                    \label{fig:crossconst}
                \end{figure}

                Here, it takes $6$ rounds until $S_2$ has arrived at its target corner.
                Because the run passing operation starts when the distance between $S_1$ and $S_2$ is $\leq\crossdistance$, after the
                passing operation, the distance between $S_2$ and $S_1^{\mathrm{succ}}$ equals $D-9$.
                We want this then still be $\geq\crossdistance$.
                So we choose $D\geq 12$ and together with the above argumentation concerning the minimum distance between sequent runs
                follows $L\geq 13$.
                In order to detect that the distance has become smaller than $12$ (and solve this problem), the viewing radius must be $11$ (cf.\ Table~\ref{table:runterminate}.$\ref{enum:stoptooclose})$).

                Using the same construction, but including all other possible cases of run passings (Section~\ref{sec:runpassingcomplete}), we can show that (unoptimized) values of $L=\pipelininginterval$ a viewing radius of $\viewingradius$ suffice.

                $\ref{enum:runinvgoodpairs})$: When defining good pairs in Subsection~\ref{ssec:runs},
                good pairs have been characterized by the relative position of the outer direct neighbors of the good pair according to the quasi line.
                The first part of the proof of $\ref{enum:runinvops})$ finishes the proof.
            \end{proof}
    \section{Run Passing Operation in Detail}\label{sec:runpassingcomplete}
        If two runs $S,S'$ that are moving towards each other but do not enable a merge, have come too close to each other, then we let them pass along each other with a constant movement speed but
        without performing diagonal hops, in order to prevent the connectivity from breaking or destroying the quasi line property.
        We call this the \emph{run passing} operation.
        If they have come this close that the operation is needed, then also $S$ can see $S'$ withing its constant viewing range and vice versa.
        For the run passing operation, it is important to assign a target corner to every such run, so that it can continue its reshapements afterwards.
        \begin{figure}[h]
            \centering
            \includegraphics[scale=\figscale]{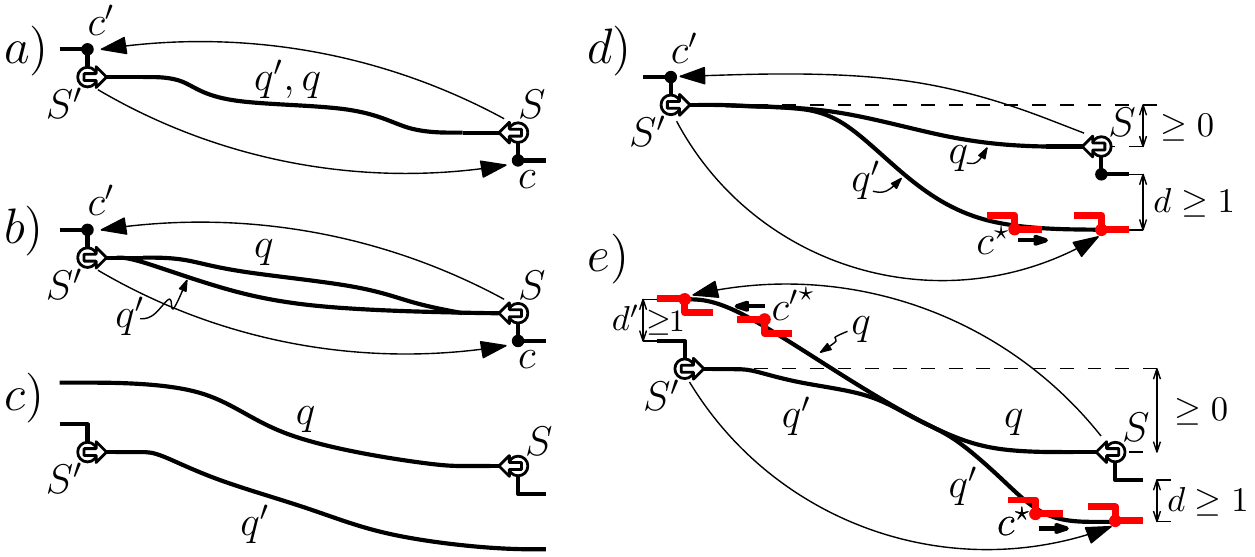}
            \caption{Classification of all possible cases that can occur if two runs need to pass along each other.}
            \label{fig:runcrossingfmadh}
        \end{figure}
        Cf.\ Figure~\ref{fig:runcrossingfmadh}.
        The subfigures show two runs $S,S'$ that need to pass along each other.
        The curve $q$, originating at $S$ symbolizes the quasi line that belongs to $S$. Analogously, the quasi line $q'$ belongs to $S'$.
        We can assume that locally no robots are located in the area above $q$ and below $q'$, because else the corresponding run $S$ respectively $S'$ could not be part of a good pair so that it is allowed to stop itself.
        Then, the run passing is not needed anymore.
        If a run passing operation is necessary, then, between the locations of $S,S'$, the quasi lines $q,q'$ must in some way overlap.
        We now explain that in all possible cases of overlappings always suitable target corners can be assigned to both of the runs, so that they can continue their reshapements after the passing.
        We refer to Figure~\ref{fig:runcrossingfmadh}.
        \begin{enumerate}[a)]
            \item $q,q'$ are identical. Then the run passing operation executes in the way explained in Section~\ref{ssec:runs}.\label{enum:qlident}
            \item $q,q'$ overlap at the location of $S$ and at the location of $S'$. Then by the same arguments as in \ref{enum:qlident}), the
            target corners $c,c'$ must exist.
            \item $q,q'$ are disjoint. Then the runners calculate whether their reshapements would make $q$ and $q'$ overlap.
            If so, then this overlapping would lead to the removal/merge of robots, so that we can credit this merge to the good pair and
            both runs are allowed to stop.
            Else, the runs continue their normal reshapement operations.
            \item $q,q'$ overlap only at the location of $S'$. Then, the vertical distance between $S$ and $S'$ must be $\geq0$ because else a merge would make a run passing superfluous.
            As on the right, $q'$ is located below $q$ ($d\geq1$), there must exist some corner $c^\star$ with the same rotation as the corner at that $S'$ is located.
            The robots then reconfigure the robots of $q'$ this way, that $c^\star$ is located at the right endpoint of the sub quasi line.
            Afterwards, $S'$ can move to this corner and then continue its reshapements.
            In contrast, $S$ can simply move to $c'$.\label{enum:qly}
            \item $q,q'$ overlap but are disjoint at both endpoints. Then, by the same arguments as in \ref{enum:qly}), both target corners exist.
        \end{enumerate}
\begin{sloppy}
\bibliography{references}
\end{sloppy}
\end{document}